\def\withnotes{0}
\def\authornamecc{Cl\'ement L. Canonne}
\def\authorafficc{Columbia University. Email: \email{ccanonne@cs.columbia.edu}. Research supported by NSF grants CCF-1115703 and NSF CCF-1319788.}
\def\authornametg{Tom Gur}
\def\authoraffitg{Weizmann Institute. Email: \email{tom.gur@weizmann.ac.il}.
Research partially supported by ISF grant 671/13.
}
\title{An Adaptivity Hierarchy Theorem for Property Testing\thanks{This work previously appeared as ``Fifty Shades of Adaptivity (in Property Testing).''}}
\date{\today}
\author{
  \ccolor{\authornamecc}\thanks{\authorafficc}
  \and \tcolor{\authornametg}\thanks{\authoraffitg}
}
\begin{document}

\maketitle

\pagenumbering{gobble}
\begin{abstract}
Adaptivity is known to play a crucial role in property testing. In particular, there exist properties for which there is an exponential gap between the power of \emph{adaptive} testing algorithms, wherein each query may be determined by the answers received to prior queries, and their \emph{non-adaptive} counterparts, in which all queries are independent of answers obtained from previous queries.

In this work, we investigate the role of adaptivity in property testing at a finer level. We first quantify the degree of adaptivity of a testing algorithm by considering the number of ``rounds of adaptivity'' it uses. More accurately, we say that a tester is $k$-(round) adaptive if it makes queries in $k+1$ rounds, where the queries in the $i$'th round may depend on the answers obtained in the previous $i-1$ rounds. Then, we ask the following question:

\begin{quote}\itshape
{Does the power of testing algorithms smoothly grow with the number of rounds of adaptivity?}
\end{quote}

\noindent We provide a positive answer to the foregoing question by proving an adaptivity hierarchy theorem for property testing. Specifically, our main result shows that for every $n\in \N$ and $0 \le k \le n^{0.99}$ there exists a property $\property_{n,k}$ of functions for which (1) there exists a $k$-adaptive tester for $\property_{n,k}$ with query complexity $\tildeO{k}$, yet (2) any $(k-1)$-adaptive tester for $\property_{n,k}$ must make $\Omega(n)$ queries. In addition, we show that such a qualitative adaptivity hierarchy can be witnessed for testing natural properties of graphs.
\end{abstract}

\ifnum\withnotes=0
  \clearpage
  \tableofcontents
\else
  \clearpage
  \listoftodos
  \hfill
  \tableofcontents
  \section*{To Consider}
  \begin{enumerate}
  		\item Testing signed majorities.
     \item k-juntas or k-parities: natural hierarchy?
  		\item MAP/IPP adaptivity hierarchy.
  		\item Distribution testing (COND).
  \end{enumerate}

  \clearpage
\fi

\clearpage\pagenumbering{arabic}


\section{Introduction}
The study of property testing, initiated by Rubinfeld and Sudan~\cite{RS:96} and Goldreich, Goldwasser and Ron~\cite{GGR:98}, has attracted significant attention in the last two decades (see, e.g., recent books~\cite{Gol:10,Gol:17,BY:17} and surveys~\cite{Ron:08,Ron:09,Canonne:15:Survey}). Loosely speaking, property testers are highly efficient randomized algorithms (typically running in sublinear time) that solve approximate decision problems, while only inspecting a tiny fraction of their inputs. More accurately, an $\eps$-tester $\Tester$ for property $\property$ is a randomized algorithm that, given query access to an input $x$, decides whether $x \in \property$ or $x$ is $\eps$-far (say, in Hamming distance) from $\property$. The query complexity of $\Tester$ is then the number of queries it makes to $x$.

In general, a testing algorithm may select its queries adaptively such that the $i$'th query is determined by the answers to the previous $i-1$ queries, in which case it is said to be an \emph{adaptive tester}. However, in many natural cases, testers may actually determine their queries solely based on their randomness (and input length), without any dependency on answers to previous queries; a tester that satisfies this condition is called a \emph{non-adaptive tester}. A natural question, which commonly arises in query-based models,  is whether the ability to make adaptive queries can significantly affect the query complexity. 

Adaptive queries can be easily emulated at the cost of a large blowup in query complexity (exponential in the number of queries). More accurately, any $q$-query adaptive tester for a property of objects represented by functions $f\colon D \to R$ can be emulated by an $|R|^q$-query non-adaptive tester (see e.g.,~\cite[Section 1.5]{Gol:17}). While for certain types of properties and models~--~e.g., linear properties~\cite{BHR:05} and properties in the dense graph model~\cite{GT03}~--~one has better emulations which come with little or no overhead, such efficient emulations cannot exist for all properties. As was shown by Raskhodnikova and Smith~\cite{RS:06}, in the bounded-degree graph model~\cite{GRexp:00} there is a large chasm between the adaptive and non-adaptive query complexities of testing many natural graph properties. In particular, any property over bounded-degree graphs with $n$ vertices, which is not determined by the \emph{vertex degree distribution},\footnote{Loosely speaking, a property $\property$ of bounded-degree graphs is not determined by the vertex degree distribution if there exist two graphs, $G_1 \in \property$ and $G_2$ that is ``far'' from $\property$, such that the vertices of $G_1$ and $G_2$ have the same degrees.} requires $\Omega(\sqrt{n})$ queries to test non-adaptively, whereas many such properties (e.g., triangle-freeness and connectivity) have $\eps$-testers with query complexity $\poly(1/\eps)$.

In this work, we investigate the role of adaptivity in property testing at a finer level. Rather than considering the extreme cases of fully adaptive testers versus completely non-adaptive testers, we consider testers with various levels of restricted adaptivity and ask the following question:

\begin{quote}\itshape
Can the power of testers gradually grow with the ``amount'' of adaptivity they are allowed to use?
\end{quote}

\noindent Besides the sheer theoretical interest of understanding the role of adaptivity in property testing, a motivation for this question comes from the \emph{constraints} that come with adaptive algorithms, which may counterbalance the apparent gain in efficiency. Indeed, non-adaptive algorithms (or at least those which only use a small number of adaptive ``stages'') may be preferred in practice to their adaptive counterparts, in spite of the larger number of queries they make. The reason for this preference is the significant gains obtained by being able to make many queries \emph{in parallel}: when each query is an experiment which, while relatively cheap by itself, may take several hours, assessing the trade-off between rounds of adaptivity and total number of queries becomes crucial. An archetypal example where such considerations prevail is the (different) setting of group testing (see e.g.~\cite[Section 1.2]{Du:GT:2000}).

To answer the foregoing question, we shall first need to give a precise definition for the ``amount'' of adaptivity that a tester uses. To this end, it is natural to consider the number of ``rounds of adaptivity'' used by a tester.\footnote{We also consider an alternative notion of \emph{tail adaptivity}, which roughly speaking refers to testers that first make a large number of non-adaptive queries and subsequently make a bounded number of adaptive queries. See~\autoref{sec:def} for details regarding how these two notions relate.} More precisely, we say that a tester is \emph{$k$-round-adaptive} if it generates and makes queries in $k+1$ rounds, where in the $i$'th round the tester queries a set of locations $Q_i$ that may depend on the answers to queries in $Q_0,\ldots,Q_{i-1}$, obtained in previous rounds. We will quantify the  ``amount'' of adaptivity that a tester uses by the number of rounds of adaptivity that it uses. Equipped with the notion of round adaptivity, we can proceed to present our results.

\subsection{Our Results}\label{ssec:our:results}
Our main result provides a positive answer to the foregoing question by showing an adaptivity hierarchy theorem for property testing; that is, we show a family of properties $\{\property_k\}_k$ such that for every $k$, the property $\property_k$ is ``easy'' for $k$-adaptive testers and ``hard'' for $(k-1)$-adaptive testers.

\begin{theorem}[Informally stated (see~\autoref{theo:hierarchy:theorem})]
\label{thm:informal1}
	For every $n\in \N$ and $0 \le k \le n^{0.99}$ there is a property $\property_{n,k}$ of strings over $\field{n}$ such that:
	\begin{enumerate}
		\item there exists a $k$-round-adaptive tester for $\property_{n,k}$ with query complexity $\tildeO{k}$, yet
		\item  any $(k-1)$-round-adaptive tester for $\property_{n,k}$ must make $\Omega(n)$ queries.	
	\end{enumerate}
\end{theorem}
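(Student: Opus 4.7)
The plan is to instantiate $\property_{n,k}$ as a ``sequential'' pointer-chasing property over strings $x \in \field{n}^n$. Starting from the fixed position $c_0 = 1$, the chain extends via a rule $c_{i+1} = H(x[c_0], \ldots, x[c_i])$ for a suitable information-theoretic hash $H$ (fixed non-uniformly as part of the property, or taken to be, e.g., $c_{i+1} = c_i + \sum_{j \leq i} x[c_j] \bmod n$), and $x$ belongs to $\property_{n,k}$ if and only if this chain visits $k+2$ distinct vertices $c_0, c_1, \ldots, c_{k+1}$ and the positions outside the chain encode a codeword $x[j] = E_{c_{k+1}}(j)$ of the terminal vertex, for a fixed family $\{E_v\}_{v \in \field{n}}$ of strings with pairwise relative Hamming distance $\Omega(1)$. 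The key design point is that a random query hitting $c_j$ reveals only the symbol $x[c_j]$, which is information-theoretically insufficient to compute $c_{j+1}$ without also knowing $x[c_0], \ldots, x[c_{j-1}]$; hence chain-following is genuinely sequential and cannot be short-circuited.

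For the upper bound, a $k$-adaptive tester (with $k+1$ rounds) follows the chain one step per round, recovering $c_{k+1}$ by the end of round $k$. In that final round it additionally samples $\tildeO{1/\eps}$ uniformly random positions, and in post-processing it verifies both the chain consistency and that each sampled value agrees with $E_{c_{k+1}}$. The pairwise distance of $\{E_v\}_v$ ensures that $\tildeO{1/\eps}$ random samples reject any $\eps$-far input with constant probability, giving total query complexity $k + \tildeO{1/\eps} = \tildeO{k}$.

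For the lower bound I would consider two distributions: $\dyes$, a uniformly random member of $\property_{n,k}$, and $\dno$, sampled identically on the chain but with the payload replaced by $E_{v'}$ for an independent uniform $v' \neq c_{k+1}$. The code distance, together with auxiliary chain-redundancy, guarantees $\dno$-instances are $\Omega(1)$-far from $\property_{n,k}$ with high probability. The heart of the argument is a \emph{progress lemma}: for any $(k-1)$-adaptive tester, because $c_{j+1}$ is a strict function of \emph{all} prior chain symbols, the tester's chain knowledge advances by at most one vertex per round, and no lucky random query can leap ahead in the chain. Since a $(k-1)$-adaptive tester has only $k$ rounds, it can at best learn $c_k$ (at the end of round $k-1$) but never queries $c_k$ itself to obtain $x[c_k]$, and therefore never computes $c_{k+1}$. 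A coupling/Bayes argument then shows the tester's view has total variation $o(1)$ between $\dyes$ and $\dno$ whenever it makes $o(n)$ queries, and Yao's principle converts this into the $\Omega(n)$ lower bound.

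The main technical obstacle is engineering $H$ and $\{E_v\}_v$ to simultaneously satisfy four competing requirements: (i) the chain is \emph{truly} sequential, so lucky random queries cannot shortcut it (otherwise a naive pointer-chain gives only $\Omega(n/\poly(k))$); (ii) $\{E_v\}$ has pairwise Hamming distance $\Omega(n)$ for soundness of the upper bound; (iii) the codewords are locally indistinguishable from uniform so that the $(k-1)$-adaptive tester cannot ``decode'' $c_{k+1}$ from payload samples without first following the chain---this rules out locally decodable codes and suggests using a random or extractor-based family; and (iv) the property is genuinely $\Omega(1)$-far from $\dno$, which requires some commitment-redundancy so that altering a single pointer $x[c_k]$ does not cheaply repair a $\dno$-instance into $\property_{n,k}$. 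Reconciling these four constraints in a single, explicit construction is the delicate part; once handled, the upper- and lower-bound arguments proceed as sketched.
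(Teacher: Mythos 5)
Your construction shares the pointer-chasing spirit of the paper's $f_k$, but the route you take to the lower bound is genuinely different from the paper's, and it is in the lower bound where the gap lies. The paper does \emph{not} attempt a direct information-theoretic ``progress lemma'' against adaptive testers. Instead it (a) works with the plain iterated pointer function $g_k(x)=x_{g_{k-1}(x)+1}$ and $f_k(x)=\indic{g_k(x)\text{ even}}$, (b) encodes inputs by a linear code $C$ that is simultaneously a strong LTC and a relaxed-LDC with rate $m\le n^{1+\alpha}$, and sets $\property_k = C(f_k^{-1}(1))$, (c) observes that a $(k,q)$-round-adaptive tester for $\property_k$ yields a $(k,q)$-round-adaptive \emph{linear} decision tree for $f_k$ (each codeword coordinate is a fixed linear form of the message), and (d) lower-bounds round-bounded LDTs for $f_{k+1}$ by a reduction to the pointer-following communication problem, invoking the Nisan--Wigderson $\bigOmega{n/k^2-k\log n}$ bound for $k$-round protocols. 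The whole weight of the lower bound rests on that known communication-complexity theorem and on the choice to prove hardness against \emph{linear} decision trees, which is exactly what absorbs the leakage you worry about under constraint (iii).

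Concretely, your proposal has two unresolved holes. First, the ``progress lemma'' --- that a $(k-1)$-round tester's chain knowledge advances by at most one vertex per round, with no leap from lucky or structured queries --- is precisely the statement the paper \emph{derives} from Nisan--Wigderson rather than proving from scratch; stating it as a lemma without a proof technique is the crux, not a technicality. Second, you have a built-in tension between constraints (ii) and (iii): you want $\{E_v\}$ to have pairwise $\Omega(1)$ relative distance so that $\tildeO{1/\eps}$ payload samples suffice to reject far inputs, yet simultaneously you want the payload to be ``locally indistinguishable from uniform'' so a tester cannot infer $c_{k+1}$ from a handful of payload samples. These two requirements are in conflict: any family with pairwise $\Omega(1)$ distance and $|\{E_v\}|=n$ admits identification of $v$ from $O(\log n)$ random coordinates with high probability, so a $(k-1)$-round tester \emph{can} learn the candidate terminal vertex from the payload. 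Your post-hoc observation that this may still not help (because the tester also cannot verify the chain reaches that $v$) is plausible but requires a careful conditional-distribution argument that your sketch does not supply; in the paper this entire difficulty disappears because the lower bound is phrased against LDTs, making the reduction immune to whatever the tester extracts from codeword coordinates. Additionally, your modification $c_{i+1}=H(x[c_0],\dots,x[c_i])$ is an attempt to rule out shortcuts that the paper does not need; the paper accepts a $\bigOmega{n/(k^2\log n)}$ bound directly from the plain pointer chase, which is the bound you dismiss as ``only $\Omega(n/\poly(k))$'' --- note that this \emph{is} what the paper proves, so if you want strictly $\Omega(n)$ you would need to strengthen Nisan--Wigderson itself, not just tweak $H$.

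In short: your upper bound sketch is essentially sound, but your lower bound replaces the paper's reduction to a known communication-complexity theorem with an unproven progress lemma and leaves the payload-leakage tension unresolved; those are genuine gaps, not polishing details.
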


The above theorem relies on an arguably contrived family of property, which was specifically tailored towards maximizing the separations; hence, one may wonder whether such strong separations also hold for more natural properties. As we show below, this is indeed the case: namely, we establish another adaptivity hierarchy theorem that, albeit weaker than~\autoref{thm:informal1}, applies to the well-studied natural problem of testing $k$-cycle freeness in the bounded-degree graph model (see~\autoref{sec:cycle_prelim} for definitions).

\begin{theorem}
\label{thm:informal2}
	Let $k\in \N$ be a constant. Then,
	 \begin{enumerate}[(i)]
      \item there exists a $k$-round-adaptive tester with query complexity $O(1/\eps)$ for $(2k+1)$-cycle freeness in the bounded-degree graph model; yet
      \item any $(k-1)$-round-adaptive tester for $(2k+1)$-cycle freeness in the bounded-degree graph model must make $\bigOmega{\sqrt{n}}$ queries, where $n$ is the number of vertices in the graph.
      \end{enumerate}
\end{theorem}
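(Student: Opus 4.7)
For the upper bound in~(i), the plan is a BFS-based tester. In round~$1$, sample $m=\Theta(1/\eps)$ random edges by independent uniform queries $(u_j,i_j)\in[n]\times[d]$, yielding edges $\{u_j,v_j\}$ with $v_j=f(u_j,i_j)$. In each subsequent round $r=2,\ldots,k+1$, expand a BFS by one level from every vertex newly discovered in round $r-1$. After the final round the view around each sampled edge contains every vertex within distance $k$ of $\{u_j,v_j\}$ together with every edge incident to a vertex at distance $\le k-1$. If $G$ is $\eps$-far from $(2k+1)$-cycle freeness then $\Omega(\eps n)$ edges must lie on some $(2k+1)$-cycle (otherwise removing them would yield a cycle-free graph), so with constant probability at least one sampled edge $\{u_j,v_j\}$ sits on a $(2k+1)$-cycle $C$. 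On $C$ every vertex is within distance $k$ of $\{u_j,v_j\}$, and the unique ``antipodal'' vertex $w^*$ (at distance exactly $k$ from both endpoints) has both of its cycle-neighbors at distance $k-1$; the final round's queries to those two neighbors reveal $w^*$ and the two edges needed to close $C$, so the tester sees $C$ and rejects. The total query complexity is $m\cdot O(d^k)=O(1/\eps)$ across $k+1$ rounds, i.e.\ $k$-adaptive.

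For the lower bound in~(ii), fix degree bound $d=2$ (larger $d$ reduces to this case by padding both distributions with an identically distributed random $(d-2)$-regular subgraph) and assume $(2k+1)\mid n$. Let $\dyes$ be the uniform distribution over Hamiltonian $n$-cycle permutations $\pi$ on $[n]$, encoded as a $2$-regular graph via $f(v,1)=\pi(v),\ f(v,2)=\pi^{-1}(v)$; let $\dno$ be the analogous encoding of a uniformly random permutation with cycle type $((2k+1)^{n/(2k+1)})$. Every graph in the support of $\dyes$ is $(2k+1)$-cycle free, whereas every graph in the support of $\dno$ contains $n/(2k+1)$ vertex-disjoint $(2k+1)$-cycles and is $\Omega(1)$-far from $(2k+1)$-cycle freeness. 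To show indistinguishability for any $(k-1)$-adaptive tester making $q=o(\sqrt n)$ queries, I would use a deferred-decisions coupling: simulate the tester's interaction with either distribution by revealing the entries of $\pi$ on demand, answering each new query with a uniformly random vertex not yet appearing in the view. A symmetry calculation shows that, provided no \emph{collision} has occurred (no response coincides with a previously seen vertex), the simulation reproduces \emph{both} $\dyes$ and $\dno$ exactly---the cycle-type constraints become visible only when an orbit closes back on itself, which requires $2k+1$ steps in $\dno$ and $n$ steps in $\dyes$ and so cannot happen along any single BFS probe of depth at most $k$. Since the tester's view after $q$ queries has size $O(q)$, each new query creates a collision with probability $O(q/n)$; a union bound gives total collision probability $O(q^2/n)=o(1)$ for $q=o(\sqrt n)$. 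Hence the tester's transcript has $o(1)$ total variation distance between $\dyes$ and $\dno$, forcing $q=\Omega(\sqrt n)$.

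The main technical obstacle is carrying out the deferred-decisions coupling rigorously across the $k$ adaptive rounds: one has to handle both ``forward'' queries $(v,1)=\pi(v)$ and ``backward'' queries $(v,2)=\pi^{-1}(v)$ symmetrically, verify that the conditional distribution of each response is genuinely uniform over unrevealed vertices under both cycle-type constraints (which reduces to the symmetry of uniformly random permutations of fixed cycle type), and ensure that the collision-probability bound holds uniformly over the possibly many BFS probes the tester may grow in parallel. It is essential that distinguishability is driven by overlap between two probes landing on the same short $\dno$-cycle---where their two ``missing'' antipodal edges differ and jointly close the cycle---since this is precisely what makes the birthday-style $\sqrt n$ bound sharp for the problem.
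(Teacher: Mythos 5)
Your upper bound argument is essentially identical to the paper's (Observation~5.4): BFS from $\Theta(1/\eps)$ random starting points to depth $k$, detecting the ``missing'' middle edge of a $(2k+1)$-cycle in the final round.

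Your lower bound is sound in spirit and uses the same birthday-paradox mechanism as the paper's Lemma~5.6, but with a genuinely different choice of $\dyes$: the paper takes \emph{disjoint} $(2k+2)$-cycles (so $\dyes$ and $\dno$ are structurally parallel, differing only by $\pm 1$ in cycle length), whereas you take a single Hamiltonian $n$-cycle. Both are legitimate, but the paper's symmetric construction makes the deferred-decisions simulator cleaner: with disjoint short cycles on both sides, the bad event (two probes landing on the same $O(1)$-size cycle) is literally the same event under both distributions. With your Hamiltonian $\dyes$, the analogous bad event is ``two probes land within cycle-distance $O(k)$ of one another,'' and the conditional law of an answer, given the transcript, is \emph{not} exactly uniform over unrevealed vertices --- it also places $\Theta(1/n)$ mass on the start-vertex of any other revealed segment (merging segments). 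Your argument still goes through because the symmetry-by-transposition argument shows that, \emph{conditioned on the answer being fresh}, it is uniform over fresh vertices under either distribution, and the probability of a non-fresh answer (what you call a collision) is $O(q^2/n)$ under both; but this is the step you flag as ``the main technical obstacle,'' and it does require care.

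There is one concrete error: padding to larger $d$ by adding ``an identically distributed random $(d-2)$-regular subgraph'' does not work, because random $(d-2)$-regular graphs contain many short cycles (constant girth), so this would destroy $(2k+1)$-cycle freeness of $\dyes$. The correct (and what the paper does implicitly) fix is trivial: keep the underlying graphs $2$-regular and simply leave the remaining $d-2$ adjacency-list ports empty. This changes the soundness constant by a factor of $d$ but otherwise nothing. Also note that in the bounded-degree model the adjacency list is queried per-port, so your initial ``sample random edges'' step is a batch of non-adaptive $(u_j, i_j)$ queries forming round $0$; you should make sure you are counting rounds consistently with the paper's convention (a $k$-round-adaptive tester makes $k+1$ batches of queries, indexed $0$ through $k$), which your BFS description does respect.
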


We conclude this section by posing two open problems that naturally arise from our work.

\begin{openquestion}[One property to rule them all]\label{openquestion:single:property}
  Does there exist an adaptivity hierarchy with respect to a \emph{single} property? That is, for any $m$ and all sufficiently large $n$, is there a property $\property$ of elements of size $n$, and $q_1>\ldots>q_m$ ($m$ ``levels'' of hierarchy) such that for every $k\in[m]$ there exists a $k$-adaptive tester for $\property$ with query complexity $q_k$, yet every $(k-1)$-adaptive tester must make $\omega(q_k)$ queries to test $\property$?
\end{openquestion}

\begin{openquestion}[Au naturel is just as good]\label{openquestion:natural:property}
  Does there exist a family of \emph{natural} properties which exhibits an adaptivity hierarchy with separations as strong as in~\autoref{thm:informal1}?
\end{openquestion}



\subsection{Previous Work}\label{ssec:previous:work}
As previously mentioned, the role of adaptivity in property testing has been the focus of several works before. It is well known that for any property of Boolean functions, there exists at most an exponential gap between adaptive and non-adaptive testers: any (adaptive) $q$-query testing algorithm for a property $\property$ of $n$-variate Boolean functions can be simulated by a non-adaptive tester with query complexity $2^q-1$. Further, such gaps are known to exist for some natural properties, such as read-once width-2 OBDDs~\cite{RT:12,BMW:11} and signed majorities~\cite{MORS:09,RS:13} (importantly, there also exist cases where adaptivity is known \emph{not} to help~\cite{BLR:93,BHR:05}). Another prominent example of a class of Boolean functions where adaptivity is known to help is that of $k$-juntas~\cite{Blais:09,Blais:08,ServedioTW:15,CSTWX:17}, which can be tested adaptively with $\tildeO{k}$ queries, yet for which the non-adaptive query complexity is $\tildeTheta{k^{3/2}}$.

Of course, the Boolean function setting is not the only one: in the dense graph model, it is known that while adaptivity can help~\cite{GR:11}, it will be at most by a quadratic factor~\cite{AFKS00,GT03}: that is, every graph property testable (adaptively) with $q$ queries has an $\bigO{q^2}$-query non-adaptive tester. This is no longer the case in the bounded-degree model, however; where Raskhodnikova and Smith showed that there exist many properties which can be tested adaptively with a constant number of queries, but for which any non-adaptive tester must have query complexity $\bigOmega{\sqrt{n}}$~\cite{RS:06}.

However, all these results, even when they establish cases where adaptivity does help, leave open the question of \emph{how much} adaptivity is needed for this to happen.
 In particular, for the case of properties of Boolean functions, many known adaptive testers which outperforms their non-adaptive counterpart do so, at some level, by conducting a binary search of some sort (see, e.g.,~\cite{Blais:09,RT:12,RS:13}) and thus comes inherently with a logarithmic numbers of ``adaptive rounds.'' 
 
Our proof of~\autoref{thm:informal1} relies on a connection between the property testing and linear decision tree models. Although many of the ingredients we use are new, the connection itself is not and was first observed in~\cite{Tell:14} (see also~\cite{BCK:14} for a slightly different connection between property testing and parity decision trees).

\paragraph{Adaptivity in other settings.} We remark that the notion of round complexity in communication complexity and interactive proof systems is somewhat analogous to that of round adaptivity, since in those models each round of communication or interaction allows the parties to adapt their strategies. Moreover, a round complexity hierarchy is known for communication complexity~\cite{NW:93} and interactive proofs of proximity~\cite{GR:17}. Finally, we also mention that the role of the number of adaptive measurements used by sparse recovery algorithms was shown to be very significant~\cite{IPW:11}.

\subsection*{Organization}
In~\autoref{sec:prelims} we provide the preliminaries required for the technical sections. In~\autoref{sec:def} we provide a precise definition for testers with bounded adaptivity. In~\autoref{sec:hierarchy} we prove our main result, which is a strong adaptivity hierarchy theorem for a property of functions. In~\autoref{sec:hierarchy:natural} we prove an adaptivity hierarchy theorem with respect to a natural property of graphs. Finally, in~\autoref{sec:misc} we discuss adaptivity round reductions, as well as a connection to communication complexity, and the relation between round and tail adaptivity.

\section{Preliminaries}
\label{sec:prelims}
We begin with standard notations:
\begin{itemize}
\item We denote the \emph{relative Hamming distance}, over alphabet $\Sigma$, between two vectors $x \in \Sigma^n$ and $y \in \Sigma^n$ by $\dist{x}{y} \eqdef \left| \left\{ x_i \neq y_i \;\colon\; i \in [n] \right\} \right|/n$. If $\dist{x}{y} \leq \eps$, we say that $x$ is \textsf{$\eps$-close} to $y$, and otherwise we say that $x$ is \textsf{$\eps$-far} from $y$. Similarly, we denote the \emph{relative distance} of $x$ from a non-empty set $S \subseteq \Sigma^n$ by $\dist{x}{S} \eqdef \min_{y \in S} \dist{x}{y}$. If $\dist{x}{S} \leq \eps$, we say that $x$ is \textsf{$\eps$-close} to $S$, and otherwise we say that $x$ is \textsf{$\eps$-far} from $S$. 

\item We denote by $A^x(y)$ the output of algorithm $A$ given direct access to input $y$ and oracle access to string $x$. Given two interactive machines $A$ and $B$, we denote by $(A^x,B(y))(z)$ the output of $A$ when interacting with $B$, where $A$ (respectively, $B$) is given oracle access to $x$ (respectively, direct access to $y$) and both parties have direct access to $z$. Throughout this work, probabilistic expressions that involve a randomized algorithm $A$ are taken over the inner randomness of $A$ (e.g., when we write $\Pr[A^x(y) = z]$, the probability is taken over the coin tosses of $A$).

\item We use the notations $\tildeO{f},\tildeOmega{f}$ to hide polylogarithmic dependencies on the argument, i.e. for expressions of the form $\bigO{f \log^c f}$ and $\bigOmega{f \log^c f}$ (for some absolute constant $c$). Finally, all our logarithms are in base $2$.
\end{itemize}

\paragraph{Integrality.} For simplicity of notation, we hereafter use the convention that all (relevant) integer parameters that are stated as real numbers are implicitly rounded to the closest integer.

\paragraph{Uniformity.} To facilitate notation, throughout this work we define all algorithms \emph{non-uniformly}; that is, we fix an integer $n \in \N$ and restrict the algorithms to inputs of length $n$. Despite fixing $n$, we view it as a generic parameter and allow ourselves to write asymptotic expressions such as $O(n)$. We remark that while our results are proved in terms of non-uniform algorithms, they can be extended to the uniform setting in a straightforward manner.

\section{The Definition of Testers with Bounded Adaptivity}
\label{sec:def}
In this section, we provide a formal abstraction that captures the notion of \emph{bounded adaptivity} within the framework of property testing. We define two notions of bounded adaptivity: (1)~\emph{round-adaptivity}, which refers to algorithms that are allowed to make a bounded number of ``batches'' of queries, where the queries in each batch may depend on the answers to previous batches; (2)~\emph{tail-adaptivity}, which refers to algorithms that first make a large number of non-adaptive queries and subsequently make a bounded number of adaptive queries.

We remark that while tail-adaptivity can be easily emulated via round-adaptivity, the converse does \emph{not} hold. Indeed, in~\autoref{sec:misc:round:tail} we show that round-adaptive testers can be much more powerful than tail-adaptive testers. Nonetheless, our lower bounds hold for the stronger round-adaptivity notion, whereas out upper bounds hold for the more restrictive tail-adaptivity.

\begin{definition}[Round-Adaptive Testing Algorithms]\label{def:ra:pt}
Let $\domain$ be a domain of cardinality $n$, and let $k,q \le n$. A randomized algorithm is said to be a \emph{$(k,q)$-round-adaptive} tester for a property $\property\subseteq 2^{\domain}$, if, on proximity parameter $\eps\in(0,1]$ and granted query access to a function $f\colon \domain\to\bool$, the following holds. 
  \begin{enumerate}[(i)]
        \item \textsf{Query Generation}: The algorithm proceeds in $k+1$ rounds, such that at round $\ell\geq 0$, it produces a set of queries $Q_\ell\eqdef\{x^{(\ell),1},\dots,x^{(\ell),\abs{Q_\ell}}\}\subseteq \domain$ (possibly empty), based on its own internal randomness and the answers to the previous sets of queries $Q_0,\dots, Q_{\ell-1}$, and receives $f(Q_\ell)=\{f(x^{(\ell),1}),\dots,f(x^{(\ell),\abs{Q_\ell}})\}$;
    \item \textsf{Completeness}: If $f\in\property$, then the algorithm outputs $\accept$ with probability at least $2/3$;
    \item \textsf{Soundness}: If $\dist{f}{\property} > \eps$, then the algorithm outputs $\reject$ with probability at least $2/3$.
  \end{enumerate}
  The \emph{query complexity} $q$ of the tester is the total number of queries made to $f$, i.e., $q = \sum_{\ell=0}^{k} \abs{Q_\ell}$. If the algorithm returns \accept with probability one whenever $f\in\property$, it is said to have \emph{one-sided} error (otherwise, it has \emph{two-sided} error). We will sometimes refer to a tester with respect to proximity parameter $\eps$ as an $\eps$-tester.
\end{definition}

\begin{remark}[On amplification]\label{rk:repetition}
We note that, as usual in property testing, the probability of success can be amplified by repetition to any $1-\delta$, at the price of an $\bigO{\log(1/\delta)}$ factor in the query complexity. Crucially, this can be done with no increase in the number of adaptive rounds: while repetition would na\"ively multiply both $q$ \emph{and} $k$ by this factor, one can avoid the latter by running the $\bigO{\log(1/\delta)}$ independent copies of the algorithm in parallel, instead of sequentially.
\end{remark}

\begin{definition}[Tail-Adaptive Testing Algorithms]\label{def:ta:pt}
Let $\domain$ be a domain of cardinality $n$, and let $k,q \le n$. A randomized algorithm is said to be a \emph{$(k,q)$-tail-adaptive} tester for a property $\property\subseteq 2^{\domain}$, if, on proximity parameter $\eps\in(0,1]$, error parameter $\delta\in(0,1]$, and granted query access to a function $f\colon \domain\to\bool$, the following holds. 
  \begin{enumerate}[(i)]
    \item \textsf{Query Generation}: The algorithm proceeds in $k+1$ rounds, such that in the first round, it produces a set of queries $Q\eqdef\{x^{(0),1},\dots,x^{(0),\abs{Q}}\}\subseteq \domain$ (possibly empty), based on its own internal randomness; and receives $f(Q)=\{f(x^{(0),1}),\dots,f(x^{(0),\abs{Q}})\}$; then it makes, over the next $k$ rounds, $k$ adaptive queries to $f$, denoted $x^{(1)},\dots,x^{(k)}$;
    \item \textsf{Completeness}: If $f\in\property$, then the algorithm outputs $\accept$ with probability at least $1-\delta$;
    \item \textsf{Soundness}: If $\dist{f}{\property} > \eps$, then the algorithm outputs $\reject$ with probability at least $1-\delta$.
  \end{enumerate}
The \emph{query complexity} $q$ of the tester is the total number of queries made to $f$, i.e., $q =\abs{Q} + k$. If  the algorithm returns \accept with probability one whenever $f\in\property$, it is said to be \emph{one-sided} (otherwise, it is \emph{two-sided}).
\end{definition}

\begin{remark}[On (lack of) amplification]\label{rk:repetition:tail:adaptive}
Unlike the round-adaptive algorithms, tail-adaptive testing algorithms do not enjoy a simple success amplification procedure which would leave unchanged the adaptivity parameter, only affecting the query complexity. This is the reason why the success probability $\delta$ is explicitly mentioned in~\autoref{def:ta:pt}.
\end{remark}

\section{A Strong Adaptivity Hierarchy}\label{sec:hierarchy}
In this section we prove the adaptivity hierarchy theorem, which shows that, loosely speaking, up to a nearly linear threshold, each additional round of adaptivity can significantly augment the power of testing algorithms.

\begin{theorem}[Adaptivity Hierarchy Theorem]
\label{theo:hierarchy:theorem}
  Fix any $\alpha \in (0,1)$. There exists a constant $\beta \in (0,1)$ such that, for every $n\in\N$, the following holds. 
  For every integer $0\leq k\leq n^{\beta}$, there exists a property $\property_k \subseteq \field{n}^{n^{1+\alpha}}$ such that, for any constant $\eps\in(0,1]$,
    \begin{enumerate}[(i)]
      \item\label{theo:hierarchy:theorem:ub} there exists a $(k,\tildeO{k})$-round-adaptive (one-sided) tester for $\property_k$; yet
      \item\label{theo:hierarchy:theorem:lb} any $(k-1,q)$-round-adaptive (two-sided) tester for $\property_k$ must satisfy $q=\bigOmega{n}$.
    \end{enumerate}
\end{theorem}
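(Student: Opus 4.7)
The strategy is to package a $k$-round pointer-chasing function inside a redundant, internally-consistent structure, so that testing $\property_k$ reduces to evaluating the chase on the decoded input, while any $(k-1)$-round tester is forced to either ``miss'' a chain pointer or blindly query a set of size $\bigOmega{n}$ at the final level. This follows the property testing / linear decision tree connection of Tell~\cite{Tell:14}.

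\paragraph{Construction.} Partition the $n^{1+\alpha}$ coordinates into $k+1$ \emph{levels}, and each level into $n$ \emph{buckets} of equal size $n^{\alpha}/(k+1)$, indexed by $[n]$. A string $x \in \field{n}^{n^{1+\alpha}}$ belongs to $\property_k$ iff (a)~every bucket is internally constant, and (b)~the pointer chain obtained by starting at bucket $1$ of level $0$, reading its constant value $i_1 \in [n]$, reading bucket $i_1$ of level $1$ to get $i_2$, and so on, terminates at a designated ``accept'' value at level $k$; moreover, to enforce $\Omega(1)$-farness for $\eps$ up to an arbitrary constant, I would replicate the accept-vs-reject decision across many auxiliary buckets via an error-correcting gadget, so that toggling accept to reject forces altering an $\bigOmega{\eps}$-fraction of the string.

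\paragraph{Upper bound (item~\ref{theo:hierarchy:theorem:ub}).} The $k$-round one-sided tail-adaptive tester performs the obvious chase: at round $\ell \le k$ it samples $\bigO{\log(k/\eps)}$ uniformly random coordinates of the currently pointed-to bucket at level $\ell$, takes majority to decode the pointer $i_{\ell+1}$, and in the last round checks whether the final bucket's decoded value encodes ``accept.'' On $x \in \property_k$ every visited bucket is internally constant, so every sample coincides with the true pointer and no ``reject'' is ever witnessed, giving zero false-reject probability. On $x$ that is $\eps$-far, either the internal-consistency check fails at some visited bucket or the chase's final decoded value is ``reject''; bucket-wise Chernoff, a union bound, and the parallel repetition trick of Remark~\ref{rk:repetition} give rejection probability $2/3$ in $\tildeO{k}$ queries.

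\paragraph{Lower bound (item~\ref{theo:hierarchy:theorem:lb}).} By Yao's principle it suffices to exhibit distributions $\dyes$ on $\property_k$ and $\dno$ on $\eps$-far strings that no deterministic $(k-1)$-round $o(n)$-query algorithm can distinguish. Both sample a uniformly random chain $(i_1,\dots,i_k)\in[n]^k$ and assign each bucket an independent uniformly random constant value from $\field{n}$, subject to the chain buckets carrying the pointer values $i_\ell \mapsto i_{\ell+1}$ and the accept (respectively, reject) gadget being planted at the relevant buckets of level $k$. The crucial round-elimination lemma states: after $r$ adaptive rounds of total $q$ queries, for any $r \le k-1$, the conditional distribution of $(i_{r+1},\dots,i_k)$ given the transcript is within total variation $\bigO{q/n}$ of uniform --- essentially because a query at level $\ell$ reveals the value of $i_{\ell+1}$ only if it lands in the chain bucket $i_\ell$, which is hidden (of conditional probability $\le 1/n$) unless the chain up to $i_\ell$ has already been unveiled. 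Iterating across the $k-1$ rounds and bounding the distinguishing advantage by the probability of ever hitting bucket $i_k$ at level $k$ yields the claimed $\bigOmega{n}$ bound.

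\paragraph{Main obstacle.} The principal difficulty is the round-elimination lemma in the presence of adaptive \emph{speculative} queries at future levels whose chain pointers are not yet known: one must argue that such speculation reveals uniform noise except on a negligible event, and that the residual information leak does not accumulate over $k-1$ compositions. This is exactly where the bound $k \le n^\beta$ enters, as the per-round error term must survive $k-1$ compositions while the bucket granularity $n^{\alpha}/(k+1)$ remains polynomially larger than $k$; an appropriate $\beta = \beta(\alpha) \in (0,1)$ is obtained by balancing these two requirements. A secondary, more design-oriented obstacle is engineering the redundant accept-vs-reject gadget so that $\dno$ genuinely concentrates on $\eps$-far strings for an arbitrary constant $\eps$, without creating local ``shortcuts'' that a tester could exploit to bypass the chain.
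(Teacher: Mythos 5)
Your high-level plan correctly identifies pointer chasing (the ``$k$-iterated address'' function $f_k$) as the combinatorial core and tries to package it inside a redundant structure, which is also what the paper does. However, the encoding you propose has a gap that, as stated, breaks the soundness of the construction, and your lower bound sketch stops short of a proof at the point where the paper takes a qualitatively different route.

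The critical issue is your accept-versus-reject gadget. In your construction, the ``answer'' of the chase---does $x$ accept?---would be written down somewhere in the string: if it lives in a single bucket, then flipping it changes only a $\Theta\bigl(1/(n(k+1))\bigr)$ fraction of coordinates, so your $\dno$ strings are $o(1)$-close to $\property_k$ and the soundness requirement fails; if you replicate it across an $\Omega(1)$ fraction of coordinates to fix the distance, then a non-adaptive tester can simply query a few random gadget coordinates and read the answer off, which would kill the lower bound you are trying to prove. There is no scale at which an explicitly stored accept/reject flag gives you both $\Omega(1)$-farness \emph{and} the need for $k$ rounds of adaptivity. The paper avoids this entirely: it never writes down the answer. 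It sets $\property_k \eqdef \mathcal{C}_{f_k} = \setOfSuchThat{C(x)}{f_k(x)=1}$ for a high-rate \emph{linear} code $C$ that is simultaneously a strong-LTC and a relaxed-LDC (\autoref{theo:ssec:nice:codes}). Then if $f_k(x)=0$, $C(x)$ is automatically $\Omega(1)$-far from $\mathcal{C}_{f_k}$ by the code's minimum distance; the tester must genuinely chase pointers through the decoder, and the local tester of the LTC handles the ``global consistency'' spot-check. Your tester, incidentally, also only inspects chain buckets and never verifies that off-chain buckets are internally constant, so it would wrongly accept strings that are far from $\property_k$ due to corruptions away from the chain---another symptom of the same missing ingredient.

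On the lower bound: you flag the ``round-elimination lemma'' as the obstacle, and indeed you leave it unproven. The paper does not prove such a lemma from scratch. Instead, it lifts the question to linear decision trees via a transference lemma (\autoref{lemma:ldt:pt}, exploiting the linearity of $C$ to turn a query to $C(x)_i$ into the linear query $\sum_{j\in S_i}x_j$), and then reduces LDT computation of $f_{k+1}$ in $k$ rounds to the $k$-round pointer-following communication problem, invoking the Nisan--Wigderson lower bound $\bigOmega{n/k^2 - k\log n}$ (\autoref{theo:nw:pointer:cc:lb}). This is where the round/query trade-off (and hence the constraint $k\le n^\beta$, from $k\ll(n/\log n)^{1/3}$) actually comes from. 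Your direct Yao-style argument against the testing problem is in principle possible, but you would need to handle the fact that once you fix the distance issue with a linear code, the adversary can make arbitrary linear queries---at which point the cleaner route really is to pass through the LDT model and to Nisan--Wigderson, as the paper does.
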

\noindent We remark that, in fact, the algorithm shown in the first item of \autoref{theo:hierarchy:theorem} also gives an upper bound for the more restricted model of \emph{tail adaptivity}. Specifically, for every $k$ there also exists an $(O(k),\tildeO{k})$-tail-adaptive (one-sided) tester for $\property_k$. Since a $(k-1,q)$-round-adaptive lower bound implies a $(k-1,q)$-tail-adaptive lower bound (see discussion in \autoref{sec:def}), this implies an adaptivity hierarchy (albeit slightly weaker than in \autoref{theo:hierarchy:theorem}) with respect to tail-adaptive testers.

Hereafter we assume, without loss of generality,\footnote{If $n$ is not prime, we choose a prime $p$ such that $n \le p \le 2p$, and use standard padding techniques.}  that $n$ is a prime number, and consider $\field{n}$, the field of order $n$. We will consider the following sequence of ``$k$-iterated address'' functions $(f_k)_{k\geq 0}$ from $\field{n}^n$ to $\bool$, which will in turn lead to the definition of the properties $(\property_k)_{k\geq 0}$ that we use to show the hierarchy theorem. Loosely speaking, $f_k$ receives a vector $x$ of $n$ pointers (indices in $[n]$) and indicates whether when jumping from pointer to pointer $k$ times, starting from an arbitrarily predetermined pointer, we reach a location in which $x$ takes an even value.

To formally define the foregoing functions, first consider $g\colon \field{n}^n\times \field{n} \to \field{n}$ given by $g(x,a) = x_{a+1}$; that is, $g$ returns the coordinate of $x\in\field{n}^n$ ``pointed to'' by $a\in\{0,\dots,n-1\}$. Based on this, we define the iterated versions of $g$, $g_0,\dots,g_n, \dots\colon \field{n}^n \to \field{n}$, as
\begin{align*}
  g_0(x) &= g(x,0)\\
  g_k(x) &= g(x, g_{k-1}(x)) \tag{$k\geq 1$}.
\end{align*}
Finally, we define the \emph{$k$-iterated address} function $f_k\colon\field{n}^n \to \field{n}$ by
\begin{align*}
  f_k(x) = \indic{g_k(x)\text{ even}} = \begin{cases}
  1 & \text{ if } g_k(x) \text{ even}\\
  0 & \text{  otherwise.}
  \end{cases}
\end{align*}
(For instance, $f_0(x)=1$ if and only if $x_1$ is even; and $f_1(x)=1$ if and only if the coordinate of $x$ pointed to by $x_1$, that is $x_{x_1+1}$, is even.) We proceed to describe the outline of the proof of~\autoref{theo:hierarchy:theorem}.

\subsection{High-Level Overview}
Broadly speaking, our roadmap for proving~\autoref{theo:hierarchy:theorem} consists of two main steps:
\begin{enumerate}
  \item We first consider the adaptivity hierarchy question in the setting of randomized \emph{decision tree} (DT) complexity (see~\autoref{sec:DT}). We can view a randomized DT for computing a function $f$ as a probabilisitic algorithm that is given query access to an input $x$ and is required to output $f(x)$ with high probability.  Adapting the definition of round adaptivity (\autoref{def:ra:pt}) in the natural way to decision trees, we will prove the randomized DT analogue of our adaptivity hierarchy theorem, using the foregoing family of address functions $(f_k)_{k\geq 0}$. Namely, we prove that for any $k \geq 0$ with $k=o(n)$, it holds that (i) $f_k$ can be computed by an algorithm making $k+1$ queries, in $k$ adaptive rounds; but (ii) any algorithm using only $k-1$ rounds of adaptivity  must make $\bigOmega{n}$ queries.
  \item We then show a bidirectional connection between \emph{adaptivity-bounded} randomized DT and property testers, which extends the connection observed by Tell~\cite{Tell:14}. This allows us to ``lift'' the DT adaptivity hierarchy theorem to property testing. Specifically, we provide two blackbox reductions between the DT problem of computing function $f$ and property testing for a related property $\property_{f}$, which preserve both the number of adaptive rounds and (roughly) the number of queries. We remark these reductions strongly rely on high-rate codes that exhibit both strong local testability and relaxed local decodability.
\end{enumerate}

The caveat with the above is that to ``lift'' DT lower bounds to testing algorithms via our methodology, we actually need to show lower bounds on a stronger model of DT (this stems from the reductions of the second item, in which we will encode the input via linear codes, requiring the DT algorithm to compute coordinates of this encoding).

Hence, we will actually work in the \emph{linear decision tree} (LDT) model, wherein the algorithm is allowed to query any linear combination (over $\field{n}$) of the coordinates, instead of only querying individual coordinates. (We note that in the case of $\field{2}$, this corresponds to the \emph{parity decision tree} model.) That is, we will proceed as follows:
\begin{enumerate}
  \item \textsf{(L)DT hierarchy:} show that for any $k\geq 0$, the function $f_k$ (i) can be computed by an efficient $(k,O(k))$-round-adaptive (deterministic) DT algorithm, but (ii) does not admit any $(k-1,o(n))$-round-adaptive (randomized, two-sided) LDT algorithm;
  \item \textsf{Transference lemmas:} Show that for any function $f\colon\field{n}^n \to \field{n}$, there exists a property $\mathcal{C}_f \subseteq \field{n}^{m(n)}$ such that, for any $k\geq 0$,
      \begin{enumerate}
          \item a $(k,q)$-round-adaptive testing algorithm for $\mathcal{C}_f$ implies a $(k,q)$-round-adaptive LDT algorithm for $f$ (\autoref{lemma:ldt:pt}).
          \item a $(k,q)$-round-adaptive DT algorithm for $f$ implies a $(k,\tildeO{q})$-round-adaptive testing algorithm for $\mathcal{C}_f$ (\autoref{lemma:pt:dt}).
      \end{enumerate}
\end{enumerate}
Combining the items above will directly imply our hierarchy theorem for property testing (\autoref{theo:hierarchy:theorem}):

\begin{proofof}{\autoref{theo:hierarchy:theorem}}
The upper bound~\ref{theo:hierarchy:theorem:ub} follows immediately from~\autoref{claim:dt:ub} and~\autoref{lemma:pt:dt}, while combining~\autoref{lemma:dt:lb} and~\autoref{lemma:ldt:pt} establishes the lower bound~\ref{theo:hierarchy:theorem:lb}.
\end{proofof}

\paragraph{Organization for the rest of the section.} In~\autoref{sec:DT}, we define the decision tree models and complexities that we shall need. Then, in~\autoref{sec:DThierarchy}, we prove the adaptivity hierarchy theorem for randomized (linear) decision trees. Finally, in~\autoref{sec:thereandbackagain} we prove the transference lemmas that allow us to lift the foregoing hierarchy theorem to the property testing framework.


\subsection{Decision Tree Zoo}
\label{sec:DT}
We shall need to extend the definitions of several different types of decision tree algorithms (see \cite{BW02} for an extensive survey of decision tree complexity) to the setting of bounded adaptivity.

Recall that a  \textsf{deterministic decision tree} is a model of computation for computing a function $f\colon\domain^n \to \domain$. The decision tree is a rooted ordered $|\domain|$-ary tree. Each internal vertex of the tree is labeled with a value $i \in \{1,\dots,n\}$ and the leaves of the tree are labeled with the elements in $\domain$.  Given an input $x \in \domain^n$, the decision tree is recursively evaluated by choosing to recurse on the $i$'th subtree in the $j$'th level if and only if $x_j=i$. Once a leaf is reached, we output the label of that leaf and halt.

Equivalently, we can view deterministic decision trees as algorithms that get oracle access to an input $x \in \domain^n$, then adaptively make queries to $x$, to the end of computing $f(x)$. (Note that the $j$'th query corresponds to the $j$'th layer of the corresponding decision tree, and that the different vertices in the $j$'th layer represent the choices of the next queries, with respect to the answers obtained for previous queries). We define the \emph{deterministic decision tree complexity} of a function $f$ to be the minimal number of queries a deterministic decision tree algorithm needs to make to compute $f$ in the worst case.\footnote{We remark that this definition corresponds to the depth the of decision tree, and not to the number of vertices or edges in the tree.}

Taking the algorithmic perspective, we define $k$-round-adaptive deterministic decision tree algorithms as algorithms that generate their queries in $k$ rounds, where queries in each round may depend on queries from previous rounds. The extension of the foregoing definition to \emph{randomized} decision tree algorithms is done in the natural way, by allowing the algorithm to toss random coins and succeed with high probability (say, $2/3$) in computing $f(x)$. Finally, we shall also extend the definition to linear decision trees, which are decision trees algorithms wherein each query is a linear combination of the elements of the domain. We remark that linear decision trees can be thought of as generalizing both \emph{parity decision trees} and \emph{algebraic query complexity algorithms}~\cite{AW:08}.

More accurately, the aforementioned notions are defined below. We provide the definition of the most general model and derive the more restricted models as special cases.

\begin{definition}[Round-Adaptive Decision Tree Algorithms]\label{def:ra:dt}
Let $\F$ be a finite field of cardinality $n$, and let $k,q \le n$. A (randomized) algorithm $D$ is said to be a \emph{$(k,q)$-round-adaptive} (linear) decision tree algorithm for computing a function $f\colon\F^n \to \F$ if, granted query access to a string $x\in\F^n$, the following holds. 
  \begin{enumerate}[(i)]
        \item \textsf{Query Generation}: The algorithm proceeds in $k+1$ rounds, such that at round $\ell\geq 0$, it produces a set of (linear) queries $Q_\ell\eqdef\{L_{\ell,1},\dots,L_{\ell,\abs{Q_\ell}}\}$, where $L_{\ell,j} \in \F^n$ specifies a linear combination, based on its internal randomness and the answers to the previous sets of queries $Q_0,\dots, Q_{\ell-1}$, and receives the answers $ \langle L_{\ell,1} , x \rangle ,\dots, \langle L_{\ell,\abs{Q_\ell}} , x \rangle$.
    \item \textsf{Computation}: The algorithm computes $f(x)$ with high probability using the answers it received in all $k$ rounds; that is, $\Pr[D^x = f(x)] \ge 2/3$.
  \end{enumerate}
  The \emph{query complexity} $q$ of the tester is the total number of (linear) queries made to $f$, i.e., $q = \sum_{\ell=0}^{k} \abs{Q_\ell}$. The \emph{randomized $(k,q)$-round-adaptive linear decision tree complexity} of a function $f$, denoted $\randLDT{k}{f}$, is the minimal query complexity for a  $(k,q)$-round-adaptive randomized linear decision tree algorithm that computes $f$.
  
  If for all $\ell \in [k+1]$ and $j\in[\abs{Q_\ell}]$ the linear combination $L_{\ell,j}$ only includes a single element (i.e., $L_{\ell,j}$ only has a single non-zero entry), we say that $D$ is a \emph{randomized $(k,q)$-round-adaptive decision tree algorithm complexity}, and denote its corresponding complexity by $\randDT{k}{f}$. If, in addition, the algorithm does not toss any random coins and succeeds with probability $1$, we say that $D$ is a \emph{deterministic $(k,q)$-round-adaptive decision tree algorithm complexity}, and denote its corresponding complexity by $\detDT{k}{f}$.
\end{definition}

\subsection{Decision Tree Hierarchy: Some Things Only Adaptivity Can Address}
\label{sec:DThierarchy}
We first establish the upper bound part of our adaptivity hierarchy theorem for DT, which follows immediately from the construction.
\begin{claim}\label{claim:dt:ub}
  For every $k\geq 0$, there exists a $(k,k+1)$-round-adaptive (deterministic) DT algorithm which computes $f_k$; that is, $\detDT{k}{f_k} \leq k+1$.
\end{claim}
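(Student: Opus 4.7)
The plan is to give a direct pointer-chasing algorithm that mirrors the recursive definition of $f_k$, making exactly one query per round for $k+1$ rounds in total.

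More concretely, I would proceed by induction on $k$, but really the argument is transparent from the definitions. Recall that $g(x,a) = x_{a+1}$, so $g_0(x) = x_1$, and $g_\ell(x) = x_{g_{\ell-1}(x)+1}$ for $\ell \geq 1$. This already suggests the algorithm: in round $0$, query coordinate $x_1$ to obtain $a_0 \eqdef g_0(x)$; then in round $\ell \in \{1,\dots,k\}$, query the coordinate $x_{a_{\ell-1}+1}$ to obtain $a_\ell \eqdef g_\ell(x)$. After round $k$, the algorithm holds the value $a_k = g_k(x) \in \field{n}$, and outputs $1$ if $a_k$ is even and $0$ otherwise (which by definition is exactly $f_k(x)$).

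It remains to verify that this algorithm falls within the claimed complexity bounds. The queries in round $\ell$ depend only on the answers received in rounds $0, \dots, \ell-1$ (via the pointer $a_{\ell-1}$), so the algorithm is well-defined as a round-adaptive decision tree. It makes precisely $k+1$ queries, one per round, across $k+1$ rounds, hence it is $(k, k+1)$-round-adaptive in the sense of~\autoref{def:ra:dt} (restricted to single-coordinate queries, i.e., as a standard deterministic decision tree). Its correctness is immediate from the recursive definition of $g_k$ and $f_k$.

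There is no real obstacle here: the construction is essentially a rewriting of the definition of $f_k$ as an algorithm, and the matching lower bound (showing that $k-1$ rounds do not suffice, even for randomized linear decision trees) is the substantive part of the hierarchy, which will be addressed separately in~\autoref{lemma:dt:lb}.
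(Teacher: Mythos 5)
Your proposal is correct and matches the paper's proof essentially verbatim: both give the same pointer-chasing algorithm that queries $x_1 = g_0(x)$ in round $0$ and $x_{g_{\ell-1}(x)+1} = g_\ell(x)$ in round $\ell$, then outputs $1$ iff $g_k(x)$ is even. Nothing further to add.
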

\begin{proof}
The algorithm is straightforward: on input $x\in\field{n}^n$, it sequentially queries $x_1=g_0(x)$, $x_{g_0(x)+1}=g_1(x)$, \dots, $x_{g_{k-1}(x)+1}=g_k(x)$; and returns $1$ if $g_k(x)$ is even, and $0$ otherwise. By definition of $f_k$, this always correctly computes the function, is deterministic, and clearly satisfies the definition of a $(k,k+1)$-round-adaptive DT algorithm.
\end{proof}

We proceed to show the lower bound part of our adaptivity hierarchy theorem for DT, which is proven via a reduction from communication complexity.

\begin{lemma}\label{lemma:dt:lb}
  There exists an absolute constant $c>0$ such that the following holds. For every $0\leq k\leq c\left(\frac{n}{\log n}\right)^{1/3}$, there is no $(k,o(n/(k^2\log n)))$-round-adaptive (randomized) LDT algorithm which computes $f_{k+1}$; that is, $\randLDT{k}{f_{k+1}}=\bigOmega{n/(k^2\log n)}$.
\end{lemma}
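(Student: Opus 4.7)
The plan is to embed a hard two-party pointer-jumping instance into the block-pointer structure of $f_{k+1}$, simulate any round-bounded LDT by a round-matched two-party protocol, and invoke the round hierarchy of Nisan and Wigderson~\cite{NW:93} on the resulting communication task.

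First I would perform the embedding: partition $\{2,\dots,n\}$ into $k+1$ consecutive blocks $B_1,\dots,B_{k+1}$ of size $\Theta(n/k)$ each, and restrict attention to inputs $x\in\field{n}^n$ satisfying $x_1+1\in B_1$; $x_j+1\in B_{i+1}$ for every $1\le i\le k$ and $j\in B_i$; and $x|_{B_{k+1}}\in\{0,1\}^{|B_{k+1}|}$. On this family, $g_i(x)+1\in B_{i+1}$ for all $0\le i\le k$, so $f_{k+1}(x)$ is exactly the single bit $x_{g_k(x)+1}\in\{0,1\}$, recovering a two-party pointer-chase-plus-bit-readout task of depth $k+2$.

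Next I would simulate any $(k,q)$-round-adaptive randomized LDT by a two-party alternating protocol, by giving Alice the coordinates $\{x_1\}\cup\bigcup_{i\text{ odd}} B_i$ and Bob $\bigcup_{i\text{ even}} B_i$, so that pointer hops alternate between parties. Any linear query $\langle L,x\rangle$ over $\field{n}$ splits as $\langle L_A,x_A\rangle+\langle L_B,x_B\rangle$. A pipelined protocol with Alice starting simulates the $k+1$ LDT query rounds in $k+1$ alternating communication rounds: in round $r$ the speaking party transmits both the partial sums still owing from the previous LDT round and those newly available for the current one. A short bookkeeping check then shows that by the end of round $k$ the listener has received enough information to reconstruct every query answer, and hence the LDT's output, using shared public randomness for the LDT's coins. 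Since each of the $q$ linear queries contributes $O(\log n)$ bits per party at most once, the total communication is $O(q\log n)$ bits.

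Finally, I would invoke the Nisan--Wigderson round hierarchy~\cite{NW:93}: any $(k+1)$-round alternating randomized two-party protocol solving the above pointer-chase-plus-parity-readout task with constant error requires $\Omega(n/k^2)$ bits of communication. The parity output does not weaken the bound, because on the distribution where $x|_{B_{k+1}}$ is uniform in $\{0,1\}^{|B_{k+1}|}$, correctly outputting the parity bit implicitly locates the final pointer with noticeable bias. Combining with the simulation yields $q\log n=\Omega(n/k^2)$, whence $q=\Omega(n/(k^2\log n))$. The restriction $k\le c(n/\log n)^{1/3}$ is exactly what guarantees $|B_i|=\Omega(k^2\log n)$, placing the pointer-jumping instance in the non-trivial regime of the lower bound.

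The main obstacle is the round accounting in the simulation. A naive alternating simulation would spend one round for each party to broadcast its partial sums, blowing up the round count to $2(k+1)$; this number of rounds is already enough to trivialize depth-$(k+2)$ pointer jumping, giving no meaningful lower bound. The pipelining trick---sending two batches of partial sums per spoken round---is what preserves the round count and thus the strength of the Nisan--Wigderson bound. A secondary, but routine, point is upgrading the classical bound from ``output the pointer'' to ``output the parity of the value at the pointer,'' which is handled by the uniform distributional argument just sketched.
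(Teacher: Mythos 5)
Your overall strategy matches the paper's: reduce computing $f_{k+1}$ to a two-party pointer-chasing task, simulate any round-bounded LDT by a round-matched communication protocol via splitting each linear query $\langle L,x\rangle$ into per-party partial sums, and invoke the Nisan--Wigderson round hierarchy. The pipelining observation is also essentially what the paper does (there, each party piggybacks the previous round's combined answers on its new batch of partial sums), so that part checks out.

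The genuine gap is in the embedding. You partition $\{2,\dots,n\}$ into $k+1$ blocks $B_1,\dots,B_{k+1}$ of size $\Theta(n/k)$ and force each pointer to jump one block forward. The paper instead maps the \emph{bipartite} pointer-following instance $(\chi_A,\chi_B)$ on $V_A,V_B$ of size $n/2$ directly onto $x\in\field{n}^n$ by setting $x_i$ to the appropriate $\chi_A$- or $\chi_B$-value, so that $g_j(x)=\pi_{j+1}(\chi_A,\chi_B)$; this is \emph{exactly} the problem whose lower bound \cite{NW:93} proves, and the ``single bit of $\pi_k$'' clause in that theorem lets one read off the parity at no extra cost (so no separate distributional argument is needed). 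Your layered variant is a different communication problem: a lower bound for it does not follow from \cite{NW:93}, which is proved for the bipartite formulation, and would require either its own round-elimination argument or a different citation. It also appears quantitatively lossy: since the universe at each hop has only $\Theta(n/k)$ elements rather than $\Theta(n)$, an NW-type bound for the layered problem would scale as $\Omega\bigl((n/k)/k^2\bigr)=\Omega(n/k^3)$, a factor $k$ short of the target $\Omega(n/k^2)$; indeed your ``non-trivial regime'' check $\lvert B_i\rvert = \Omega(k^2\log n)$ is the sanity check for a bound of order $\lvert B_i\rvert/k^2$, not $n/k^2$.

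There is a second, smaller accounting slip: you give Alice both $x_1$ and $B_1$, so the hop from $g_0$ to $g_1$ stays entirely inside Alice's input and costs no communication. Your chain therefore only has $k+1$ cross-party hops, one fewer than the paper's construction, which makes $x_1=\chi_A(v_0)$ point straight into Bob's half so that already $g_1$ requires Bob, and obtains $g_{k+1}(x)=\pi_{k+2}(\chi_A,\chi_B)$. This costs you one level of the round hierarchy. Both issues disappear if you adopt the bipartite embedding.
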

\begin{proof}
We will reduce the computation of $f_{k+1}$ (in $k$ rounds of adaptivity) to a related $k$-round two-party randomized communication complexity problem, the ``pointer-following'' problem introduced by Papadimitriou and Sipser~\cite{PS:82}, and conclude by invoking the lower bound of Nisan and Wigderson~\cite{NW:93} on this problem.

This communication complexity problem between two computationally unbounded players, Alice and Bob, is defined as follows. Let $V_A$ and $V_B$ be two disjoint sets of cardinality $n/2$, and let $v_0\in V_A$ be a fixed element known to both players. The input is a pair of functions $(\chi_A,\chi_B)$, where $\chi_A\colon V_A\to V_B$ and $\chi_B\colon V_B\to V_A$. Alice and Bob are given $\chi_A$ and $\chi_B$ respectively, as well as a common random string, and their goal is to compute $\pi_k(\chi_A,\chi_B)\eqdef \chi^{(k)}(v_0)$ with high probability, where $\chi^{(\ell)}$ is the $\ell$-iterate of the function $\chi$:
\begin{align*}
  \chi\colon V_A\cup V_B &\to V_A\cup V_B\\
  v &\mapsto 
  \begin{cases}
    \chi_A(v) & v\in V_A\\
    \chi_B(v) & v\in V_B.
  \end{cases}
\end{align*}
(In other terms, one can see the communication problem as Alice and Bob sharing the edges of a bipartite directed graph where each node has out-degree exactly one, and the goal is to find at which vertex the path of length $k$ starting at a prespecified vertex $v_0$, on Alice's side, ends.)

We will rely on the following lower bound on the $k$-round, randomized (public-coin) version of this problem.
\begin{theorem}[{\cite{NW:93}, rephrased}]\label{theo:nw:pointer:cc:lb}
Any $k$-round randomized communication protocol for the ``pointer-following'' problem, in which Bob sends the first message, must have total communication complexity $\bigOmega{\frac{n}{k^2} - k\log n}$, even to only compute a single bit of $\pi_k(\chi_A,\chi_B)$ with probability at least $2/3$.
\end{theorem}
Note that as long as  $k \ll \left(\frac{n}{\log n}\right)^{1/3}$, this lower bound is $\bigOmega{\frac{n}{k^2}}$. We remark that the fact that the lower bound still holds even when only a single bit of the answer is to be computed will be crucial for us, as our goal is to reduce the communication complexity problem of ``pointer-following'' to computing the Boolean function $f_{k+1}$ in the randomized decision tree model.

Let $\Algo$ be any $(k,q)$-round-adaptive (randomized) LDT algorithm computing $f_{k+1}$. Writing $V_A=\{v_0,\dots, v_{\frac{n}{2} - 1}\}$ and $V_B=\{u_0,\dots, u_{\frac{n}{2}-1}\}$, fix a bijection between $V\eqdef V_A\cup V_B$ (of size $n$) and $\field{n}$ mapping $v_0$ to $1$, so that we identify $V$ with $\field{n}$. On input $(\chi_A,\chi_B)$, Alice and Bob implicitly define the element $x\in\field{n}^{n}$
 by
 $x_1 = \chi_A(v_0)$, $x_2 = \chi_A(v_1)$, \dots, $x_\frac{n}{2} = \chi_A(v_{\frac{n}{2}-1})$ and $x_{\frac{n}{2}+1} = \chi_B(u_0)$, $x_{\frac{n}{2}+2} = \chi_A(u_1)$, \dots, $x_{n} = \chi_A(u_{\frac{n}{2}-1})$. From this, we get that $\pi_{k+2}(\chi_A,\chi_B)=g_{k+1}(x)$, recalling that $g_k(x) = g(x, g_{k-1}(x))$ is recursively defined for $k\geq 1$, and \new{$g_0(x) = x_1$}. Hence deciding whether
  $\pi_{k+2}(\chi_A,\chi_B)$ is even is exactly equivalent to computing $f_{k+1}(x)$.
 
 Alice and Bob can then simulate the execution of $\Algo$ as follows. Without loss of generality, assume it is Alice's turn to speak. To answer a query of the form $\phi_S(x)=\sum_{i\in S} x_i$, she computes $\sum_{i\in S\cap V_A} x_i$ and sends it to Bob; on his side, Bob computes $\sum_{i\in S\cap V_B} x_i$, and receiving Alice's message can then recover the value $\phi_S(x)$ and feed it to the algorithm. (In the next round, when sending his side of the (new) queries to Alice, Bob will also send this value $\phi_S(x)$, to make sure that both sides know the answers to all queries so far.) Since all queries of a given adaptive round of $\Algo$ can be prepared and sent in parallel (costing $O(\log n)$ bits of communication per query), this simulation can be performed in $k+1$ rounds (as many as $\Algo$ takes) with communication complexity $\bigO{q\log }$. At the end, whichever of Alice and Bob received the latest message holds the answer (to ``is $\pi_{k+1}(\chi_A,\chi_B)$ an even node?''), which by assumption on $\Algo$ is correct with probability at least $2/3$. Alice and Bob then use an extra round of communication to broadcast the answer to the other party, bringing the total number of rounds to $k+2$.

 But by~\autoref{theo:nw:pointer:cc:lb}, computing this bit of $\pi_{k+2}(\chi_A,\chi_B)$ with only $k+2$ rounds of communication (Bob speaking first) requires $\bigOmega{\frac{n}{k^2}}$ bits of communication, and so we must have $q=\bigOmega{\frac{n}{k^2\log n}}$.
\end{proof}

\subsection{Adaptivity Bounded Testers and Decision Trees: There and Back Again}
\label{sec:thereandbackagain}
In this section we show how to reduce problems in the adaptivity bounded property testing model to problems in the adaptivity bounded (linear) decision tree model, and vice versa. We begin in \autoref{sec:local_code_prelim}, by presenting the required preliminaries regarding error-correction codes. Then, in \autoref{sec:trans}, we prove the ``transference lemmas'' between these models.

\subsubsection{Preliminaries: Locally Testable and Decodable Codes}
\label{sec:local_code_prelim}
Let $k,n\in\N$. A \textsf{code} over alphabet $\Sigma$ with distance $d$ is a function $C\colon\Sigma^k \to \Sigma^n$ that maps \textsf{messages} to \textsf{codewords} such that the distance between any two codewords is at least $d = d(n)$. If $d = \Omega(n)$, $C$ is said to have \textsf{linear distance}. If $\Sigma=\bitset$, we say that $C$ is a \textsf{binary} code. If $C$ is a linear map, we say that it is a \textsf{linear} code. The \textsf{relative distance} of $C$, denoted by $\delta(C)$, is $d/n$, and its \textsf{rate} is $k/n$. When it is clear from the context, we shall sometime abuse notation and refer to the code $C$ as the set of all codewords $\{C(x)\}_{x \in \Sigma^k}$. Following the discussion in the introduction, we define locally testable codes and locally decodable codes as follows.
\begin{definition}[Locally Testable Codes]
  A code $C\colon\Sigma^k \to \Sigma^n$ is a \textsf{locally testable code} ($\LTC$) if there exists a probabilistic algorithm (tester) $T$ that makes $O(1)$ queries to a purported codeword $w \in \Sigma^n$ and satisfies:
  \begin{enumerate}
\item \textsf{Completeness}: For any codeword $w$ of $C$ it holds
    that $\Pr_T[T^w = 1] \ge 2/3$.
  \item \textsf{Strong Soundness}: For all $w \in \Sigma^{n}$,
    \begin{equation*}
      \Pr_T[T^w = 0] \geq \poly\big(\dist{w}{C}\big).
    \end{equation*}
  \end{enumerate}
\end{definition}

\begin{definition}[Locally Decodable Codes]\label{def:ldc}
  A code $C\colon\Sigma^k \to \Sigma^n$ is a \textsf{locally decodable code} ($\LDC$) if there exists a constant $\decrad \in (0,\delta(C)/2)$ and a probabilistic algorithm (decoder) $D$ that, given oracle access to $w \in \Sigma^{n}$ and direct access to index $i\in[k]$, satisfies the following condition: For any $i\in [k]$ and $w \in \Sigma^{n}$ that is $\decrad$-close to a codeword $C(x)$ it holds that $\Pr[D^{w}(i) = x_i] \geq 2/3$. The \textsf{query complexity} of a $\LDC$ is the number of queries made by its decoder.
\end{definition}

We shall also need the notion of $\rLDC$s (introduced in~\cite{BGHSV:06}). Similarly to $\LDC$s, these codes have decoders that make few queries to an input in attempt to decode a given location in the message. However, unlike $\LDC$s, the relaxed decoders are allowed to output a special symbol that indicates that the decoder detected a corruption in the codeword and is unable to decode this location. Note that the decoder must still avoid errors (with high probability).\footnote{The full definition of $\rLDC$s, as defined in~\cite{BGHSV:06} includes an additional condition on the success rate of the decoder. Namely, for every $w\in\bitset^n$ that is $\decrad$-close to a codeword $C(x)$, and for at least a $\rho$ fraction of the indices $i\in[k]$, with probability at least $2/3$ the decoder $D$ outputs the $i$'th bit of $x$. That is, there exists a set $I_w\subseteq [k]$ of size at least $\rho k$ such that for every $i\in I_w$ it holds that $\Pr\left[D^w(i)=x_i\right]\geq 2/3$.
	We omit this condition since it is irrelevant to our application, and remark that every $\rLDC$ that satisfies the first two conditions can also be modified to satisfy the third conditions (see~\cite[Lemmas 4.9 and 4.10]{BGHSV:06}).}

\begin{definition}[Relaxed-LDC]\label{def:rldc}
  A code $C\colon\Sigma^k \to \Sigma^n$ is a $\rLDC$ if there exists a constant $\decrad \in (0,\delta(C)/2)$ such that the following holds.
  \begin{enumerate}
  \item \textsf{(Perfect) Completeness}: For any $i\in [k]$ and $x\in\Sigma^k$  it holds that $D^{C(x)}(i) = x_i$.
  \item \textsf{Relaxed Soundness}: For any $i\in [k]$ and any $w \in\Sigma^{n}$ that is $\decrad$-close to a (unique) codeword $C(x)$, it holds that
    \begin{equation*}
      \Pr[D^{w}(i) \in \{x_i,\bot\}] \geq 2/3. 
    \end{equation*}
  \end{enumerate}
\end{definition}
\noindent There are a couple of efficient constructions of codes that are both $\rLDC$s and $\LTC$s (see~\cite{BGHSV:06,GGK:15}). We shall need the construction in~\cite{GGK:15}, which has the best parameters for our setting.\footnote{Specifically, the codes in~\cite{GGK:15} are meaningful for every value of the proximity parameter, whereas the codes in~\cite{BGHSV:06} require $\eps > 1/\poly\!\log(k)$.}

\begin{theorem}[e.g., {\cite[Theorem 1.1]{GGK:15}}]\label{theo:ssec:nice:codes}
	\label{thm:local_codes}
	For every $k\in\N$, $\alpha>0$, and finite field $\field{}$ there exists an $\field{}$-linear code $C\colon\field{}^k \to \field{}^{k^{1+\alpha}}$ with linear distance, which is both a $\rLDC$ and a (one-sided error) $\LTC$ with query complexity $\poly(1/\eps)$; furthermore, both testing and (relaxed) decoding procedures are non-adaptive.
\end{theorem}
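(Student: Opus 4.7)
Since this is a cited theorem, the plan is to follow the general template of Ben-Sasson, Goldreich, Harsha, Sudan, and Vadhan~\cite{BGHSV:06} combined with the parameter refinements of~\cite{GGK:15} that make the construction work for every proximity parameter $\eps$, rather than only for $\eps > 1/\poly\log(k)$.

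First, I would set $t = \lceil 1/\alpha \rceil$ and view a message $x \in \F^k$ as the coefficient vector of a polynomial $p\colon \F^t \to \F$ of individual degree less than $h$, with $h$ chosen so that the number of such coefficients is exactly $k$. The codeword $C(x)$ is defined as the evaluation of $p$ on $S^t \subseteq \F^t$ for a subset $S \subseteq \F$ of size $O(h)$; by construction, $C$ is $\F$-linear, has length $O(k^{1+\alpha})$, and has linear distance by the Schwartz--Zippel lemma applied to polynomials of total degree at most $th$.

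Second, I would equip the code with non-adaptive line-based procedures. The tester samples a uniformly random line in $S^t$ and rejects if the restriction to that line is not a univariate polynomial of the expected degree; the relaxed decoder for the position corresponding to a point $z \in S^t$ samples a random line through $z$, runs the same consistency check on that line, and then either outputs the interpolated value at $z$ (on acceptance) or the symbol $\bot$ (on rejection). Completeness is perfect in both cases, while strong soundness for the $\LTC$ and relaxed soundness for the $\rLDC$ will follow from the standard analyses of low-degree line tests together with the local correctability of Reed--Muller codes.

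The hard part will be pushing the query complexity down to $\poly(1/\eps)$: the bare line tests above use $\Theta(h)$ queries per line, which can be super-polynomial in $1/\eps$ once $\eps$ is sub-constant. My plan is to compose the outer Reed--Muller-style code with a PCPP-based inner proof system as in~\cite{GGK:15}, attaching to each outer symbol a short inner certificate of consistency along the relevant lines, so that the verifier only ever examines a $\poly(1/\eps)$-sized window. The delicate step --- and the main obstacle --- is to arrange this composition so that it simultaneously preserves $\F$-linearity, perfect completeness of the relaxed decoder, and non-adaptivity of both tester and decoder, all while keeping the final blocklength within the $k^{1+\alpha}$ budget; this is exactly the technical content that the construction of~\cite{GGK:15} is engineered to deliver.
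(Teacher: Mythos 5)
This theorem is cited from~\cite{GGK:15} (Theorem~1.1 therein) and the paper gives no proof of its own, so there is no in-paper argument to compare your plan against. Your sketch outlines a plausible route --- a multivariate low-degree base code with non-adaptive line-based testing and decoding, composed with PCPP-based inner proofs to bring the query complexity down to $\poly(1/\eps)$ --- and, importantly, you correctly identify the genuine technical crux: arranging the composition so that $\F$-linearity, perfect completeness of the relaxed decoder, non-adaptivity of both procedures, and the $k^{1+\alpha}$ blocklength budget are all preserved simultaneously. That engineering is precisely what~\cite{GGK:15} (building on the relaxed-LDC framework of~\cite{BGHSV:06}) carries out, and deferring to it is legitimate for a cited black-box result. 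One small bookkeeping point: with $t=\lceil 1/\alpha\rceil$, $h^t=k$, and $|S|=O(h)$, the outer evaluation domain $S^t$ has size $|S|^t=O(h)^t=O(k)$, not $O(k^{1+\alpha})$; the $k^{1+\alpha}$ overhead comes from the appended inner proof strings in the composition, not from the bare Reed--Muller-style outer code, so your length accounting should be attributed accordingly. You should also confirm that~\cite{GGK:15} actually starts from the specific Reed--Muller/line-test base code you describe rather than a different LTC family, though the composition philosophy is the same either way.
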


\subsubsection{Transference Lemmas}
\label{sec:trans}
Fix any $\alpha > 0$. Let $C\colon\field{n}^n\to\field{n}^m$ be a code with constant relative distance $\delta(C) > 0$, with the following properties:
\begin{itemize}
  \item \emph{linearity}: for all $i\in[m]$, there exists a set $S_i \subseteq [n]$ such that $C(x)_i = \sum_{j\in S_i} x_i$ for all $x\in\field{n}^n$;
  \item \emph{rate}: $m\leq n^{1+\alpha}$;
  \item \emph{testability}: $C$ is a \emph{strong-LTC} with one-sided error and \emph{non-adaptive} tester;
  \item \emph{decodability}: $C$ is a \emph{relaxed-LDC}.\end{itemize}
We will rely on~\autoref{theo:ssec:nice:codes} for the existence of such codes. Before delving into the details, we briefly explain the reason for each of the points above. The linearity will be crucial to reduce to and from the LDT model: indeed, any coordinate of a codeword corresponds to a fixed linear combination of the coordinates of the message, which corresponds to a single LDT query on that particular linear combination. The rate bound is required since our lower bounds are in terms of the dimension $n$ and upper bounds in terms of the block-lengh $m$. Ideally, we would like $m=O(n)$, to have a direct correspondence between the LDT and the property testing query complexities; however, this nearly-linear rate is the best known achievable for constant-query LTCs and relaxed-LDCs~\cite{GGK:15}. The LTC property will be useful to us in the reduction from property testing to DT query complexity (where we will need to first check that our input is close to a codeword, in view of decoding the closest message during the reduction), where the \emph{strong} testability (i.e., rejection with probability proportional to the distance from a valid codeword) will allow us do deal with arbitrarily small values of the proximity parameter. Similarly, we will rely on the (relaxed) LDC property in that same reduction, in order to obtain individual coordinates of the message, given query access to an input close to a codeword.

\noindent We proceed to show the framework for reducing property testing to decision tree complexity and vice-versa. For a fixed function $f\colon\field{n}^n\to\bool$, consider the subset $f^{-1}(1) \subseteq \field{n}^n$; and define the sets of codewords $\mathcal{C}\eqdef C(\field{n}^n)\subseteq \field{n}^m$,  $\mathcal{C}_f\eqdef C(f^{-1}(1)) = \setOfSuchThat{ C(x) }{ x\in \field{n}^n,\ f(x)=1 }\subseteq \mathcal{C}$.

Consider now testing the property $\mathcal{C}_f$: we will reduce the LDT computation of $f$ to the testing of $\mathcal{C}_f$. Specifically, we prove the following.

\begin{lemma}[LDT $\leadsto$ PT Reduction Lemma]\label{lemma:ldt:pt}
Fix any $f\colon\field{n}^n\to\bool$. If there exists an $(k,q)$-round-adaptive tester for $\mathcal{C}_f$, then there is an $(k,q)$-round-adaptive LDT algorithm for $f$.
\end{lemma}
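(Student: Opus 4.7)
The strategy is to simulate the tester for $\mathcal{C}_f$ on the (implicit) input $C(x)$, using one LDT query per property-testing query. Concretely, fix any $x \in \field{n}^n$ for which we want to compute $f(x)$. By the linearity assumption on $C$, for every $i \in [m]$ there is a fixed set $S_i \subseteq [n]$ with $C(x)_i = \sum_{j \in S_i} x_j = \langle \mathds{1}_{S_i}, x\rangle$. Thus a query to coordinate $i$ of $C(x)$ is exactly a single linear query to $x$, and entire rounds of property-testing queries translate one-to-one into rounds of linear queries.

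The LDT algorithm then works as follows. Choose a proximity parameter $\eps$ with $0 < \eps < \delta(C)$ (any such constant works, since $\delta(C) > 0$ is a fixed constant). Run the assumed $(k,q)$-round-adaptive tester $T$ for $\mathcal{C}_f$ with proximity parameter $\eps$, answering each of its queries $i$ in round $\ell$ by issuing the corresponding linear query $\langle \mathds{1}_{S_i}, x \rangle$ and returning the answer to $T$. Output $\accept$ if $T$ accepts and $\reject$ otherwise. The round structure and total query count are preserved by construction, so this is a $(k,q)$-round-adaptive LDT procedure.

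For correctness I need to verify both cases. If $f(x) = 1$, then $C(x) \in \mathcal{C}_f$, so $T$ accepts with probability at least $2/3$. If $f(x) = 0$, then $x \notin f^{-1}(1)$, and since $C$ is injective (it is a code with positive distance), $C(x) \notin \mathcal{C}_f$. Moreover, every element of $\mathcal{C}_f \subseteq C(\field{n}^n)$ is a codeword of $C$, so the distance between $C(x)$ and $\mathcal{C}_f$ is at least the minimum distance of $C$, i.e.\ at least $\delta(C) > \eps$. Therefore $C(x)$ is $\eps$-far from $\mathcal{C}_f$ and $T$ rejects with probability at least $2/3$. This gives the desired LDT algorithm computing $f$ with the same round and query budgets.

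There is no genuine obstacle here beyond bookkeeping: the key conceptual points are (i) linearity of $C$ makes each codeword coordinate a legitimate LDT query on $x$, and (ii) the positive minimum distance of $C$, combined with injectivity, forces $C(x)$ to be far from $\mathcal{C}_f$ whenever $f(x) = 0$, supplying soundness for any small enough constant $\eps$.
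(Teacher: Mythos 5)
Your proof is correct and takes essentially the same approach as the paper: simulate the tester on the (implicit) codeword $C(x)$, translating each coordinate query into a single linear query via the linearity of $C$. The one small refinement you make is choosing $\eps$ strictly less than $\delta(C)$ rather than $\eps = \delta(C)$ as in the paper, which cleanly avoids a strict-versus-non-strict inequality issue in the soundness case (the minimum distance guarantees $\dist{C(x)}{\mathcal{C}_f} \geq \delta(C)$, not $> \delta(C)$).
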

\begin{proof}
Suppose there exists a $(k,q)$-round-adaptive tester $\Tester$ for $\mathcal{C}_f$. On input $x\in\field{n}^n$, we emulate the invocation of $\Tester$, with respect to proximity parameter $\eps = \delta(C)$, on the encoded input $y\eqdef C(x) \in \field{n}^m$ and output $1$ if and only if $\Tester$ returns \accept. To see why this is correct, observe that by definition, if $f(x)=1$ then $y\in\mathcal{C}_f$. However, if $f(x)=0$, then for any $y'\in\mathcal{C}_f$  such that $y' = C(x)$ we must have $\dist{y}{y'}>\eps$, by the distance of our code.

It remains to show that this simulation can be achieved efficiently, as claimed. To do so, we will rely on the fact that $C$ is a linear code: whenever $\Tester$ queries $y_i$, we can compute the set $S_i\subseteq [n]$ (which only depends on $C$, and not on $x$), and perform the LDT query $\sum_{j\in S_i} x_j$. The simulation clearly preserves the number of adaptive rounds as well, concluding the proof.
\end{proof}

In our next lemma, we give a partial converse relating property testing and decision tree complexity, with some logarithmic overhead in the resulting query complexity.
\begin{lemma}[PT $\leadsto$ DT Reduction Lemma]\label{lemma:pt:dt}
Fix any $f\colon\field{n}^n\to\bool$. If there exists an $(k,q)$-round-adaptive (randomized) DT algorithm for $f$, then there is a $(k,O(q \log q) + \poly(1/\eps))$-round-adaptive tester for $\mathcal{C}_{f}$. (Moreover, if the DT algorithm is always correct, then this tester is one-sided.)
\end{lemma}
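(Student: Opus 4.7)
The plan is to construct a tester that first verifies, via the LTC property of $C$, that the input $y \in \field{n}^m$ is close to some codeword of $\mathcal{C}$, and then simulates the DT algorithm on the underlying message by translating each coordinate query into a call to the relaxed-LDC decoder. Let $\Algo$ be the given $(k,q)$-round-adaptive DT algorithm for $f$, and assume without loss of generality that $\eps \le \decrad$, since proximity parameters below $\decrad$ only make the testing task harder. The tester, on input $y$ with proximity parameter $\eps$, would first run $O(1)$ independent invocations of the strong LTC tester for $C$ (each costing $\poly(1/\eps)$ queries) to amplify the rejection probability on $\eps$-far inputs to at least $5/6$. Since that tester is non-adaptive, this entire phase fits in a single round. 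Concurrently in that round, and over the subsequent $k$ rounds, it would simulate $\Algo$: whenever $\Algo$ issues a query to coordinate $j$ of the purported message, the tester runs the relaxed decoder $D$ of $C$ on index $j$, repeated $O(\log q)$ times with majority vote, so that each simulated coordinate query succeeds with probability at least $1-\frac{1}{10q}$. Because $D$ is non-adaptive, all decoder calls required for a given DT round can be issued in parallel, so the simulation fits in exactly $k+1$ tester rounds. If any decoder call returns $\bot$ or the LTC test rejects, the tester rejects; otherwise it outputs \accept iff the simulated $\Algo$ outputs $1$.

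For completeness, suppose $y \in \mathcal{C}_f$, i.e., $y = C(x)$ for some $x$ with $f(x)=1$. By one-sided completeness of the LTC, phase~1 accepts with probability $1$; by perfect completeness of the relaxed decoder (as in~\autoref{def:rldc}), every simulated coordinate query returns $x_j$ (never $\bot$); hence $\Algo$ sees the true input $x$ and outputs $1$ with probability at least $2/3$, so the tester accepts with probability at least $2/3$. For soundness, suppose $y$ is $\eps$-far from $\mathcal{C}_f$. If $y$ is also $\eps$-far from $\mathcal{C}$, strong soundness of the amplified LTC rejects with probability at least $5/6$. Otherwise, $y$ is $\eps$-close (hence $\decrad$-close) to a \emph{unique} codeword $C(x) \in \mathcal{C}$; since $y$ is $\eps$-far from $\mathcal{C}_f$, we must have $f(x)=0$. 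By the union bound over the $q$ amplified decoder calls, with probability at least $9/10$ every call returns $x_j$ or $\bot$; if any returns $\bot$ the tester rejects immediately, and otherwise $\Algo$ is fed the true input $x$ and correctly outputs $0$ with probability at least $2/3$, leading the tester to reject. Either way the overall rejection probability is at least $2/3$.

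The query complexity is $\poly(1/\eps) + O(q\log q)$ as claimed, and the round complexity is exactly $k+1$ since the LTC queries and the first round of decoder calls are batched together. If moreover $\Algo$ is always correct, then on $y \in \mathcal{C}_f$ all three components (the one-sided LTC, the relaxed decoder under perfect completeness, and $\Algo$) succeed with probability one, yielding a one-sided tester. The main technical point I expect to be delicate is amplifying the relaxed decoder enough to absorb a union bound over all $q$ simulated coordinate queries \emph{without} spending additional adaptive rounds; this is precisely what the non-adaptivity of the decoder guaranteed by~\autoref{theo:ssec:nice:codes} buys us, since we can then run the $O(\log q)$ repetitions entirely in parallel inside each DT round.
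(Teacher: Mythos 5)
Your construction and analysis closely track the paper's own proof: run the strong-LTC tester in parallel with the first round of decoder calls, simulate each DT query by an $O(\log q)$-fold amplified call to the relaxed decoder, reject on $\bot$, and accept iff $\Algo$ outputs~$1$. The completeness argument and the crucial observation that non-adaptivity of both the LTC tester and the decoder keeps the round count at $k+1$ are exactly as in the paper.

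There is, however, one small but genuine arithmetic gap at the end of your soundness analysis. In the case where $y$ is close to $\mathcal{C}$ but far from $\mathcal{C}_f$, you argue that with probability at least $9/10$ all decoder calls behave correctly, and that \emph{conditioned on this}, $\Algo$ correctly outputs $0$ with probability at least $2/3$, giving an overall rejection probability of at least $\frac{9}{10}\cdot\frac{2}{3} = \frac{3}{5}$, which is strictly less than $2/3$. (Further amplifying the decoder does not help, since the bottleneck is $\Algo$'s own $2/3$ success probability.) So the claim ``the overall rejection probability is at least $2/3$'' does not follow as written. The paper faces the same issue (obtaining $27/50$) and resolves it with an explicit final step: run a constant number of independent copies of the entire tester in parallel and take a majority vote, which amplifies the overall success probability to $2/3$ without increasing the number of adaptive rounds (cf.\ \autoref{rk:repetition}). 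Your proof needs this closing amplification step; once added, it matches the paper's argument.

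A minor cosmetic point: you describe the LTC phase as ``$O(1)$ invocations, each costing $\poly(1/\eps)$ queries,'' whereas under the paper's \autoref{def:ldc}/Definition of strong LTCs each invocation costs $O(1)$ queries and one needs $\poly(1/\eps)$ invocations to drive the rejection probability on $\eps$-far inputs up to a constant. Both readings give total query cost $\poly(1/\eps)$ and do not affect correctness, but the paper's accounting follows the latter convention.
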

\begin{proof}
Fix $k\geq 0$, and suppose there exists such a $(k,q)$-round-adaptive DT algorithm $\Algo$ for $f$. On input $y\in\field{n}^m$ and proximity parameter $\eps\in(0,1]$, we would like to decode $y$ to a message $x\in\field{n}^n$ and invoke the algorithm on $x$ to determine if $f(x)=1$; more precisely, we wish to invoke the DT algorithm while simulating each query to $x$ by locally decoding $y$ using  $O(1)$ queries. The issue, however, is that the success of the local decodable is only guaranteed for inputs that are sufficiently close to a valid codeword, and we have no such guarantee on $y$ \textit{a priori}. However, recalling that $C$ is a strong-LTC, we can handle this as follows. Letting $\decrad>0$ be the decodability radius of the relaxed-LDC $C$, we set $\delta^\ast \eqdef \min(\decrad,\eps)$.
\begin{enumerate}[(1)]
  \item\label{ptldt:step:1} Run independently $O(\poly(1/\delta^\ast))$ times the local tester for the strong-LTC $C$ on $y$, and output \reject if any of these rejected. Since every invocation of the local tester makes $O(1)$ queries to $y$, this has query complexity $O(\poly(1/\delta^\ast)) = O(\poly(1/\eps))$; and if $\dist{y}{\mathcal{C}} > \delta^\ast$ then this step outputs $\reject$ with probability at least $9/10$.
  \item\label{ptldt:step:2} Invoke $\Algo$ on the message $x\eqdef\arg\!\min \setOfSuchThat{\dist{C(x)}{y}}{ x\in\field{n}^n }$, answering each query $x_i$ by calling the local decoder for the relaxed-LDC $C$. This is done so that the decoder is correct with probability at least $1/(10q)$, by standard repetition (taking the plurality value); with the subtlety that we output $\reject$ immediately whenever the decoder returns $\bot$. Since each query can be simulated by $O(\log(q))$ queries (repeating the $O(1)$ queries of the decoder $O(\log(1/q))$ times), this step has query complexity $O(q\log q)$; and at the end, we output $\accept$ if, and only if, $\Algo$ returns the value $1$ for $f(x)$.
\end{enumerate}
Importantly, Step~\ref{ptldt:step:1} can be run in parallel to Step~\ref{ptldt:step:2}, and in particular can be executed during the first ``batch'' of queries $\Algo$ makes. This guarantees that the whole simulation above uses the same number of adaptive rounds as $\Algo$, as claimed. It remains to argue correctness.

\paragraph{Completeness.} Assume $y\in \mathcal{C}_{f}$. In particular, $y$ is a codeword of $C$, and the (one-sided) local tester returns \accept with probability one in~\ref{ptldt:step:1}. Then, since by definition there is a unique $x\in \field{n}^n$ such that $C(x)=y$, the local decoder of Step~\ref{ptldt:step:2} will correctly output the correct answer for each query with probability $1$, and therefore $\Algo$ will correctly output $f(x)$ with probability $2/3$ -- so that the tester returns $\accept$ with probability at least $2/3$ overall. (Moreover, if the DT algorithm $\Algo$ always correctly compute $f$, then the tester returns \accept with probability one.)

\paragraph{Soundness.} Assume $\dist{y}{\mathcal{C}_{f}} > \eps$. If $\dist{y}{\mathcal{C}} > \delta^\ast$, then the local tester returns \reject with probability at least $9/10$ in Step~\ref{ptldt:step:1}. Therefore, we can continue assuming that $\dist{y}{\mathcal{C}} \leq \delta^\ast$, which satisfies the precondition of the relaxed-LDC decoder in Step~\ref{ptldt:step:2}. By a union bound over all $q$ queries, with probability at least $9/10$ we have that the decodings performed in Step~\ref{ptldt:step:2} are all correct; in which case we answer the queries of the algorithm according to $x\eqdef\arg\!\min \setOfSuchThat{\dist{C(x)}{y}}{ x\in\field{n}^n }$ (or possibly answered by $\bot$, in which case the tester immediately outputs \reject and we are done). Since $\dist{y}{C(x)} \leq \delta^\ast \leq \eps$, we must have  $C(x)\not\in\mathcal{C}_f$, which implies that $\Algo$ correctly returns $f(x)=0$ with probability at least $2/3$, in which case the tester outputs $\reject$. Overall, this happens with probability at least $9/10\cdot 9/10\cdot 2/3 = 27/50$.\medskip

Thus, in both cases the tester is correct with probability at least $27/50$; repeating a constant number of times (as explained in~\autoref{rk:repetition}) and taking the majority vote allows us to amplify the probability of success to $2/3$.
\end{proof}

\section{An Adaptivity Hierarchy with respect to a Natural Property}\label{sec:hierarchy:natural}
\newcommand{\disjcycle}[2]{\Sigma_{#1,#2}}
\newcommand{\indepset}[1]{\textsc{Is}_{#1}}

In this section we show a \emph{natural} property of graphs for which, broadly speaking, more adaptivity implies more power. More specifically, we prove the following adaptivity hierarchy theorem with respect to the property of $k$-cycle freeness in the bounded-degree graph model (see definitions in~\autoref{sec:cycle_prelim}).

\begin{theorem}
\label{thm:natural_hierarchy}
	Let $k\in \N$ be a constant. Then,
	 \begin{enumerate}[(i)]
      \item there exists a $(k,O(1/\eps))$-round-adaptive (one-sided) tester for $(2k+1)$-cycle freeness in the bounded-degree graph model; yet
      \item any $(k-1,q)$-round-adaptive (two-sided) tester for $(2k+1)$-cycle freeness in the bounded-degree graph model must satisfy $q=\bigOmega{\sqrt{n}}$.
      \end{enumerate}
\end{theorem}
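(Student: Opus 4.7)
For the upper bound, I follow the standard BFS-based tester of Goldreich-Ron for cycle freeness and track its round complexity explicitly. Sample $\Theta(1/\eps)$ vertices uniformly at random; for each sampled $v$, perform a breadth-first search from $v$ to depth $k$ in $k+1$ synchronized rounds (round $r$ queries all vertices at distance $r$ from $v$); reject iff some BFS closes a $(2k+1)$-cycle. In the bounded-degree setting, each BFS explores $O(d^k) = O(1)$ vertices for constant $d,k$, so the total query complexity is $O(1/\eps)$ over exactly $k$ rounds of adaptivity. If $v$ lies on a $(2k+1)$-cycle, then every cycle vertex is at distance at most $k$ from $v$, so by the final round every such vertex has been queried and the closing edge between the two distance-$k$ vertices is revealed. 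Completeness is immediate; soundness follows from the standard fact that an $\eps$-far graph in the bounded-degree model has $\Omega(\eps n)$ vertices lying on some $(2k+1)$-cycle (via a fractional packing argument), so $\Theta(1/\eps)$ random samples hit one with constant probability.

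\paragraph{Lower bound: two locally indistinguishable distributions.}
For the lower bound I invoke Yao's minimax principle with two distributions over $2$-regular $n$-vertex graphs (so degree queries carry no information). Let $\mathcal{D}_{\mathrm{yes}}$ be a uniformly labeled Hamiltonian cycle on $[n]$ (girth $n \gg 2k+1$, hence $(2k+1)$-cycle free), and let $\mathcal{D}_{\mathrm{no}}$ be a uniformly random partition of $[n]$ into blocks of size $2k+1$, with each block forming a cycle (needing $n/(2k+1)$ edges removed to destroy all $(2k+1)$-cycles, hence $\Omega(1/k)$-far from the property). The key structural observation is that in both distributions the BFS neighborhood of any vertex to depth $k-1$ is deterministically a length-$2(k-1)$ path, since $2(k-1)+1 < 2k+1$ means no short cycle can close at this depth. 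In particular, a $(k-1)$-round-adaptive tester has only $k$ query rounds (rounds $0,\dots,k-1$), so from a vertex first queried in round $r$ it can reach depth at most $k-1-r$, and every single ``exploration piece'' of its transcript looks identical under $\mathcal{D}_{\mathrm{yes}}$ and $\mathcal{D}_{\mathrm{no}}$.

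\paragraph{Coupling via deferred decisions, and the main obstacle.}
The plan is then to couple the two distributions via the principle of deferred decisions: reveal the random labeling only as it is touched by the algorithm's queries. As long as no ``collision'' occurs --- that is, no two distinct BFS chains share a vertex, which in the ``no'' case corresponds to two queries landing on the same $(2k+1)$-cycle --- the algorithm's transcript is a disjoint union of label-random paths, and an induction on rounds shows that its subsequent queries are deterministic functions of this transcript, so the transcript itself has the same distribution under $\mathcal{D}_{\mathrm{yes}}$ and $\mathcal{D}_{\mathrm{no}}$. A birthday-style bound controls the collision probability: the algorithm touches $O(q)$ vertices in total (each BFS growing by $O(1)$ per round, since $k$ is constant), and the probability that two of them land in the same block of $2k+1$ vertices is $O(q^2 k / n) = o(1)$ whenever $q = o(\sqrt{n})$. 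The main obstacle will be formalizing this coupling cleanly in the presence of adaptivity: one must verify that before any collision, the algorithm's adaptive choices are statistically identical in both distributions, so that any distinguishing advantage must come from a collision event --- whose $o(1)$ probability then yields, via Yao's principle, the claimed $\Omega(\sqrt{n})$ lower bound.
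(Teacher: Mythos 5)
Your proposal follows essentially the same route as the paper: a BFS-based $(k,O(1/\eps))$-round-adaptive tester for the upper bound, and a Yao-style lower bound against $(k-1,q)$-round-adaptive testers built from two distributions of labelled $2$-regular graphs whose local views coincide until a low-probability collision event, handled by a deferred-decisions/simulator argument. The one substantive design difference is your choice of yes-distribution: you take a single uniformly labelled Hamiltonian cycle, whereas the paper uses $\lfloor n/(2k+4)\rfloor$ vertex-disjoint $(2k+4)$-cycles (padded by isolated vertices) against the no-distribution of disjoint $(2k+3)$-cycles, and then re-indexes $k\to k-1$ to get the $(2k+1)$ statement. Both choices work, and the indistinguishability argument is the same in spirit. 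Two small cautions for your variant. First, be precise about the boundary count: after $k$ rounds (rounds $0,\dots,k-1$), the algorithm has \emph{queried} vertices up to depth $k-1$ but has \emph{observed labels} up to depth $k$, so the relevant observed structure on each cycle is a path on $2k+1$ vertices with the closing edge at depth $k$ unqueried; your ``depth $k-1$ / length $2(k-1)$'' phrasing undercounts, even though the conclusion (odd $(2k+1)$-cycles are undetectable while even $2k$-cycles would be) is correct. Second, since your yes-distribution is a Hamiltonian cycle with no isolated vertices, while $n$ may not be divisible by $2k+1$, your no-distribution could have a handful of isolated vertices; you should either take $n$ divisible by $2k+1$ (harmless for constant $k$) or, as the paper does (its event $E_1$), argue that hitting an isolated vertex is an $o(1)$-probability event and fold it into the collision bound.
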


We stress that although~\autoref{theo:hierarchy:theorem} establishes an adaptivity hierarchy with stronger separations, the merit of \autoref{thm:natural_hierarchy} is in showing that an adaptivity hierarchy also holds for a \emph{natural} well-studied property. We further observe that the choice of the bounded-degree graph model is not insignificant: one cannot hope to establish such a striking gap in other settings such as the dense graph model or in the Boolean function testing setting. Indeed, as discussed in~\autoref{ssec:previous:work} it is well-known that in these two models, any adaptive tester can be made (fully) non-adaptive at the price of only a quadratic and exponential blowup in the query complexity, respectively(see~\cite{AFKS00,GT03} for the former; the latter is folklore). We remark that in \autoref{ssec:misc:reducing:rounds} we discuss emulating testers with $k$ rounds of adaptivity by testers with $k'<k$ rounds.

\subsection{Cycle Freeness in the Bounded Degree Graph Model}
\label{sec:cycle_prelim}

In the subsection we provide the necessary definitions and establish a basic upper bound on the complexity of $k$-adaptive testing of cycle freeness in the bounded degree graph model. We begin with a definition of the model.

Let $G=(V,E)$ be a graph with constant degree bound $d < |V|$, represented by its adjacency list; that is, represented by a function $g: V \times d \to V$ such that $g(v,i) = u \in V$ if $u$ is the $i$th neighbor of $v$ and $g(v,i) = 0$ if $v$ has less than $i$ neighbors. A bounded degree graph property $\property$ is a subset of graphs (represented by their adjacency list) that is closed under isomorphism; that is, for every permutation $\pi$ it holds that $G \in \property$ if and only if $G \in \pi(G)$. The distance of graph $G$ from property $\property$ is the minimal fraction of entries in $g$ one has to change to reach an element of $\property$.

We extend the definition of functional round-adaptive testing algorithms to the bounded degree graph model in the natural way.

\begin{definition}[Round-Adaptive Testing in the Bounded Degree Graph Model]
Let $G=(V,E)$ be a graph with constant degree bound $d < |V|$, represented by its adjacency list $g\colon V \times d \to V$, and let $k,q \le n$. A randomized algorithm is said to be a \emph{$(k,q)$-round-adaptive} tester for a (bounded degree) graph property $\property$, if, on proximity parameter $\eps\in(0,1]$ and granted query access to $g$, the following holds. 
  \begin{enumerate}[(i)]
        \item \textsf{Query Generation}: The algorithm proceeds in $k+1$ rounds, such that at round $\ell\geq 0$, it produces a set of queries $Q_\ell\eqdef\{x^{(\ell),1},\dots,x^{(\ell),\abs{Q_\ell}}\}\subseteq \domain$ (possibly empty), based on its own internal randomness and the answers to the previous sets of queries $Q_0,\dots, Q_{\ell-1}$, and receives $f(Q_\ell)=\{g(x^{(\ell),1}),\dots,g(x^{(\ell),\abs{Q_\ell}})\}$;
    \item \textsf{Completeness}: If $G\in\property$, then the algorithm outputs $\accept$ with probability at least $2/3$;
    \item \textsf{Soundness}: If $\dist{G}{\property} > \eps$, then the algorithm outputs $\reject$ with probability at least $2/3$.
  \end{enumerate}
  The \emph{query complexity} $q$ of the tester is the total number of queries made to $f$, i.e., $q = \sum_{\ell=0}^{k} \abs{Q_\ell}$. If the algorithm returns \accept with probability one whenever $f\in\property$, it is said to have \emph{one-sided} error (otherwise, it has \emph{two-sided} error). As before, we will sometimes refer to a tester with respect to proximity parameter $\eps$ as an $\eps$-tester.
\end{definition}

Next, we define the (bounded degree) graph property of $k$-cycle freeness.

\begin{definition}[Cycle Freeness]
	Let $k\in\N$. A graph $G=(V,E)$ is said to be $k$\textsf{-cycle free} if it does not contain any cycle of length less or equal to $k$; that is, if for every $t \leq k$ and $v_1,\ldots,v_{t} \in V$ either $(v_t,v_1) \not\in E$ or there exists $i\in[t-1]$ such that $(v_i,v_{i+1}) \not\in E$.
\end{definition}
Finally, we make the following observation, which roughly speaking implies that when surpassing a certain threshold of round adaptivity, testing cycle freeness in the bounded degree graph model becomes ``easy.''\footnote{This is a specific case of a more general algorithm for testing subgraph freeness; see e.g.~\cite[Section 9.2.1]{Gol:17}.}

\begin{observation}
	\label{obser:upper_bound}
  For every $k\in\N$ there exists a $(k,q)$-round-adaptive testing algorithm for $(2k+1)$-cycle freeness and $(2k+2)$-cycle freeness in the bounded-degree graph model with query complexity $q=O(d^{k+1}/\eps)$.
\end{observation}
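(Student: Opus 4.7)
The plan is to implement the standard cycle-detection-via-BFS tester, and to carefully account for how the depth of the BFS translates into rounds of adaptivity, observing that $O(1/\eps)$ parallel BFSes do not cost any additional rounds.

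\textbf{Algorithm.} Sample $s = \Theta(1/\eps)$ seed vertices $v_1,\dots,v_s$ uniformly at random from $V$. Starting from all seeds in parallel, perform a breadth-first search to depth $k+1$: in round $\ell=0$ query the neighbor lists of $v_1,\dots,v_s$, and in each subsequent round $\ell \in \{1,\dots,k\}$ query the neighbor lists of all newly-discovered vertices at depth $\ell$. Reject iff some BFS tree reveals a cycle of length at most $2k+2$.

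\textbf{Round and query accounting.} BFS to depth $k+1$ uses exactly $k+1$ batches of queries, one per level; running many BFSes in parallel merges their batches, so the total number of rounds remains $k+1$, i.e., the tester is $k$-round-adaptive. Since the degree is bounded by $d$, each BFS explores at most $1+d+d^2+\cdots+d^{k+1}=O(d^{k+1})$ vertices and the total query complexity is $O(d^{k+1}/\eps)$.

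\textbf{Completeness.} BFS only uncovers genuine edges, so the tester flags a short cycle only when one truly exists. If $G$ is $(2k+1)$- or $(2k+2)$-cycle free, the tester accepts with probability $1$, which also yields one-sidedness.

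\textbf{Soundness.} Fix $\ell \in \{2k+1, 2k+2\}$ and assume $\dist{G}{\property_\ell}>\eps$ where $\property_\ell$ is the class of $\ell$-cycle-free graphs. Let $V'\subseteq V$ be the set of vertices that lie on some cycle of length at most $\ell$. Every cycle of length $\le \ell$ contains a vertex of $V'$, so deleting all edges incident to $V'$ already yields an $\ell$-cycle-free graph; this alters at most $2d\abs{V'}$ entries of the adjacency-list representation, so $\eps$-farness gives $\abs{V'}\ge \eps n/2$. Hence a uniformly random seed hits $V'$ with probability $\ge \eps/2$, and with our choice of $s$, some $v_i$ lies in $V'$ with probability $\ge 2/3$. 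For such a seed, any $\ell$-cycle through $v_i$ has its ``antipodal'' vertex (or two antipodal vertices) at distance $\lceil \ell/2\rceil = k+1$ from $v_i$, so the depth-$(k+1)$ BFS either rediscovers a depth-$(k+1)$ vertex along two distinct paths (when $\ell=2k+2$) or exposes an edge between two depth-$k$ frontier vertices when querying their neighbor lists in round $k$ (when $\ell=2k+1$). In both cases the cycle is detected.

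\textbf{Main subtle point.} The only delicate part is the adaptivity bookkeeping for the odd case: detecting a $(2k+1)$-cycle requires seeing the edge between the two depth-$k$ frontier vertices, which is exposed only when their neighbor lists are queried; this consumes round $k$ of the tester and leaves exactly $k+1$ batches overall, matching the claimed $k$-round-adaptivity. All other aspects (the $\abs{V'}=\Omega(\eps n)$ bound and the union-bound completeness/soundness amplification) are routine.
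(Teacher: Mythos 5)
Your proof is correct and takes essentially the same BFS-based approach as the paper, simply spelling out in more detail what the paper leaves as ``easy to see'' (the $\Omega(\eps n)$ lower bound on cycle-participating vertices, the round bookkeeping, and the even/odd detection mechanism). One small slip worth noting: for $\ell=2k+1$, the two antipodal vertices of the cycle lie at distance $\lfloor \ell/2\rfloor = k$ from $v_i$, not $\lceil \ell/2\rceil = k+1$; your subsequent sentence nevertheless handles the odd case correctly (detection via the edge between the two depth-$k$ frontier vertices exposed in round $k$).
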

\begin{proof}
	The algorithm explores the graph in the most natural way: starting from  $O(1/\eps)$ ``source vertices'' selected uniformly at random, it adaptively explore their neighborhoods by querying at each round the neighbors of the previously reached vertices, in a breadth-first-search fashion. If any $(2k+1)$-cycle (resp. $(2k+2)$-cycle) is detected, the algorithm rejects, and accepts otherwise. (Clearly, this tester is one-sided.) It is easy to see that if any of the source vertices belongs to a $(2k+1)$- or $(2k+2)$-cycle, then this bounded-depth BFS will detect it; thus, we only need to argue that if the graph is $\eps$-far from cycle freeness, with constant probability, one of the source vertices will participate in such a cycle. But this is the case, as any such graph must have at least $\eps n$ vertices participating in a cycle (indeed, otherwise one could ``correct'' the graph by removing less than $\eps d n$ vertices, contradicting the distance).
		
	Finally, for each source vertex, after $k$ rounds of adaptivity the number of nodes visited is at most $O(d^{k+1})$, hence the claimed query complexity.
\end{proof}

\subsection{Lower Bounds for Round-Adaptive Testers}
In this subsection, we prove the following lemma, which roughly speaking shows that testing $(2k+3)$-cycle freeness is hard for $k$-round-adaptive testing algorithms. 
\begin{lemma}
\label{lem:cycle_lower_bound}
  Let $k\in\N$ be constant. Then, any $(k,q)$-round-adaptive testing algorithm for $(2k+3)$-cycle freeness in the bounded-degree graph model must satisfy $q=\bigOmega{\sqrt{n}}$.
\end{lemma}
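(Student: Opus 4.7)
My plan is to invoke Yao's minimax principle and exhibit two distributions $\mathcal{Y}$ and $\mathcal{N}$ over bounded-degree graphs on $[n]$, supported on $(2k+3)$-cycle free graphs and on graphs at constant distance from $(2k+3)$-cycle free respectively, such that no deterministic $(k,q)$-round-adaptive algorithm with $q = \littleO{\sqrt{n}}$ can distinguish them with distinguishing advantage $1/3$. I will work with $2$-regular graphs, which satisfy any degree bound $d \geq 2$; larger $d$ is easily accommodated by augmenting each distribution with an independent copy of the same auxiliary random $(d-2)$-regular graph. Assuming $(2k+3) \mid n$, let $\mathcal{Y}$ be a uniformly random Hamilton cycle on $[n]$, and let $\mathcal{N}$ be a uniformly random partition of $[n]$ into $n/(2k+3)$ cycles of length $2k+3$. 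Every graph in the support of $\mathcal{Y}$ is a single $n$-cycle, and hence $(2k+3)$-cycle free whenever $n > 2k+3$; every graph in the support of $\mathcal{N}$ consists of $n/(2k+3) = \Theta(n)$ disjoint $(2k+3)$-cycles, so eliminating every such cycle requires modifying $\Omega(n/(2k+3))$ edges, i.e., a $\Omega(1/((2k+3)d)) = \Omega(1)$ fraction of the $nd$ adjacency-list entries (the hidden constant depending on the constants $k$ and $d$).

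To establish indistinguishability, I would view the algorithm's interaction with the graph via the principle of deferred decisions: rather than sampling the graph up front, respond to each query $g(v,i)$ by drawing its answer from the conditional distribution given previously-revealed answers. Let $\mathsf{Coll}$ denote the event that, at some point during the $q$ queries, the answer returned causes the revealed subgraph to contain a cycle. Under $\mathcal{Y}$, the revealed subgraph is always a subgraph of a single $n$-cycle and hence a forest whenever $q < n$; so $\Pr_{\mathcal{Y}}[\mathsf{Coll}] = 0$. Under $\mathcal{N}$, a union bound over the $\binom{q}{2}$ pairs of queries yields $\Pr_{\mathcal{N}}[\mathsf{Coll}] = \bigO{q^2/n}$ (with hidden constants depending on $k$), since at each query the probability that the new answer hits an already-revealed vertex is $\bigO{1/n}$: that answer is essentially uniform over the $\Theta(n)$ unrevealed vertices, with short-cycle corrections contributing only lower-order terms. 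Combined with the claim that, conditional on $\neg\mathsf{Coll}$, the transcripts under $\mathcal{Y}$ and $\mathcal{N}$ are identically distributed, this gives total variation distance $\bigO{q^2/n} = \littleO{1}$ between the two transcript distributions when $q = \littleO{\sqrt{n}}$, which precludes the $1/3$ distinguishing gap required by a $2/3$-success tester.

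The main technical hurdle is the conditional equidistribution of transcripts under $\neg\mathsf{Coll}$. The intuition is that, as long as the revealed subgraph remains a forest of paths each of length strictly less than $2k+3$, any such partial structure admits symmetric completions in either model: in $\mathcal{Y}$ by extension to a long Hamilton cycle; in $\mathcal{N}$ by completing each revealed path into its own $(2k+3)$-cycle using fresh unrevealed vertices. I would proceed by induction on the query count, showing that the conditional distribution of the next revealed vertex, given the current forest structure, is uniform over the currently-unrevealed vertices in both models. Formally, this reduces to a direct counting of completions of the partial graph within each model's cycle-length constraint, together with the verification that $\mathcal{N}$'s short-cycle constraint does not bias the response as long as no short cycle has closed; a careful coupling argument along the lines of the ``principle of deferred decisions'' makes this precise.
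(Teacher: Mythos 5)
Your overall strategy is the same as the paper's: apply Yao's minimax principle, build two distributions supported respectively on $(2k+3)$-cycle-free graphs and on graphs $\Omega(1)$-far from that property, and argue transcript indistinguishability by bounding the probability that the tester observes anything that betrays the difference. Your choice of yes-distribution differs, though: you use a uniformly random Hamilton cycle, whereas the paper uses a uniformly random union of $n/(2k+4)$ disjoint $(2k+4)$-cycles (plus a few isolated vertices to handle divisibility). Both choices are legitimate in principle, but the paper's symmetric choice makes the coupling argument cleaner, and it also sidesteps a concrete error in your plan: you propose to pad up to degree $d > 2$ by superimposing an independent random $(d-2)$-regular graph, but that auxiliary graph can itself contain short cycles and would destroy the promise that $\mathcal{Y}$ is supported on $(2k+3)$-cycle-free graphs. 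The model allows degrees up to $d$, so there is no need to pad; simply leave the graphs $2$-regular.

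The more substantive issue is your claim that, conditional on $\neg\mathsf{Coll}$, the transcripts under $\mathcal{Y}$ and $\mathcal{N}$ are \emph{identically} distributed, together with the supporting induction that ``the conditional distribution of the next revealed vertex, given the current forest structure, is uniform over the currently-unrevealed vertices in both models.'' Neither statement is true as written. Your event $\mathsf{Coll}$ only forbids the revealed subgraph from containing a cycle, but it does \emph{not} forbid two previously-discovered path fragments from merging (i.e.\ an answer returning a vertex that is already an endpoint of a different revealed arc). Under $\mathcal{Y}$ every revealed arc lies on the single Hamilton cycle, so extending any endpoint has probability on the order of $m/n$ (with $m$ the current number of arcs) of landing on another arc's endpoint and merging them; consequently the next revealed vertex is \emph{not} uniform over unrevealed vertices, and under $\neg\mathsf{Coll}$ you can witness arcs of length much greater than $2k+3$. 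Under $\mathcal{N}$, by contrast, arcs can merge only when two fresh queries already landed in the \emph{same} short cycle, which is a separate, lower-order event; in particular no arc ever exceeds $2k+3$ vertices. Thus the two conditional transcript distributions are genuinely different, and your argument as written does not close. The fix is exactly what the paper does via its events $E_1$ (hitting an isolated vertex) and, crucially, $E_2$ (a fresh query landing in an already-visited cycle): broaden the bad event to include \emph{any} rediscovery of an already-revealed vertex (equivalently, to any arc merger), show this broader event has probability $\bigO{q^2/n}$ under both distributions, and only then argue exact equidistribution — which the paper does by exhibiting an explicit input-oblivious simulator that answers every query with a uniformly random pair of fresh labels, valid for both $\dyes$ and $\dno$ precisely because the degree of adaptivity caps each discovered arc at $2k+3$ vertices, one short of what would distinguish a $(2k+3)$-cycle from a $(2k+4)$-cycle (or from a Hamilton cycle).
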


In stark contrast, recall that \autoref{obser:upper_bound} shows that testing $(2k+2)$-cycle freeness is easy for $k$-round-adaptive testing algorithms. Indeed, the proof of \autoref{thm:natural_hierarchy} follows  by combining \autoref{obser:upper_bound} and \autoref{lem:cycle_lower_bound} together.

\begin{proofof}{\autoref{lem:cycle_lower_bound}} We will show a distribution of $(2k+3)$-cycle free graphs, denoted $\dyes$, and a distribution of graphs that are ``far'' from being $(2k+3)$-cycle free, denoted $\dno$, and prove that no $(k,q)$-round-adaptive testing algorithm can distinguish, with high probability, between $\dyes$ and $\dno$. Loosely speaking, $\dyes$ consists of all graphs whose vertices are covered via disjoint $(2k+4)$-cycles, and $\dno$ consists of all graphs whose vertices are covered via disjoint $(2k+3)$-cycles.

More accurately,  denote by $\property_{t,n,d}$ the subset of $n$-node graphs with maximum degree at most $d$ that are $t$-cycle-free. Let $\disjcycle{t}{s}$ be the $2$-regular graph on $st$ vertices made of $s$ disjoint $t$-cycles, namely $(v_1,\dots, v_t)$, $(v_{t+1},\dots, v_{2t})$, $(v_{(s-1)t+1},\dots, v_{st})$. Denote also by $\indepset{r}$ the independent set on $r$ vertices. For two graphs $G,G'$ on respectively $m$ and $m'$ vertices and with $e$ and $e'$ edges, we write $G\disjunion G'$ for the graph on $m+m'$ vertices and with $e+e'$ edges obtained by concatenating disjoint copies of $G,G'$. 

For $k=O(1)$, we let $\ell\eqdef \flr{\frac{n}{(2k+4)}}$, $\ell'\eqdef \flr{\frac{n}{(2k+3)}}$, and define the two distributions over $n$-node graphs $\dyes$ and $\dno$ as follows.
\begin{itemize}
  \item $\dyes$ is the uniform distribution over all isomorphic copies of $G^\yes_k\eqdef \disjcycle{(2k+4)}{\ell}\disjunion\indepset{n-(2k+4)\ell}$;
  \item $\dno$ is the uniform distribution over all isomorphic copies of $G^\no_k\eqdef\disjcycle{(2k+3)}{\ell'}\disjunion\indepset{n-(2k+3)\ell'}$.
\end{itemize}
The next claim establishes that indeed $\dyes$ consists of \yes-instances, whereas $\dno$ consists of \no-instances.

\begin{claim}
  $\dyes$ is supported on $\property_{(2k+3),n,d}$, while every graph in the support of $\dno$ is $\bigOmega{1}$-far from $\property_{(2k+3),n,d}$.
\end{claim}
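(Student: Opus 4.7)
The plan is to dispatch the two assertions separately. For the first one, I would simply unfold the construction: every graph in $\supp{\dyes}$ is an isomorphic copy of $G^\yes_k = \disjcycle{(2k+4)}{\ell}\disjunion\indepset{n-(2k+4)\ell}$, whose connected components are either isolated vertices or single $(2k+4)$-cycles. Neither type of component contains any cycle of length strictly less than $2k+4$, so $G^\yes_k$ itself lies in $\property_{(2k+3),n,d}$; since $(2k+3)$-cycle freeness is invariant under graph isomorphism, the same holds for every graph in $\supp{\dyes}$.

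For the second (and main) assertion, I would leverage the fact that $G^\no_k$, by construction, contains $\ell' = \flr{n/(2k+3)} = \Theta(n)$ pairwise \emph{vertex-disjoint} copies of the $(2k+3)$-cycle, which I call $C_1,\dots,C_{\ell'}$. Given any $G' \in \property_{(2k+3),n,d}$, the graph $G'$ cannot contain any $C_i$ as a subgraph (since each $C_i$ is itself a $(2k+3)$-cycle); therefore, for every $i$, at least one edge of $C_i$ must be missing in $G'$. The crucial observation is that the absence of an edge $\{u,v\}$ that is present in $G^\no_k$ forces at least one adjacency-list entry among $g(u,\cdot)$ and $g(v,\cdot)$ to be altered, and vertex-disjointness of the $C_i$ guarantees that such an altered entry cannot simultaneously witness the removal of an edge from a different $C_j$. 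Summing over the $\ell'$ cycles produces at least $\ell' = \Omega(n)$ distinct entry changes, which, divided by the total number $dn$ of adjacency-list entries (with $d$ a fixed constant), yields the claimed $\Omega(1)$ lower bound on the relative distance.

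The main (mild) obstacle is the careful bookkeeping in this last step: one must ensure that a single entry change cannot be credited to two different cycles, which is precisely where the vertex-disjointness of the $C_i$ is used, and one must also invoke the convention that breaking an edge in the adjacency-list representation costs at least one entry. Once these points are made precise, the rest is a straightforward counting argument relying only on the definition of the bounded-degree graph model given in \autoref{sec:cycle_prelim}. Combined with \autoref{obser:upper_bound}, this will in turn deliver the distributional testing lower bound that underlies \autoref{lem:cycle_lower_bound}.
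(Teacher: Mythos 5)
Your argument is correct and follows essentially the same route as the paper: the first part unfolds the construction (only $(2k+4)$-cycles present), and the second part uses the $\ell' = \Theta(n)$ vertex-disjoint $(2k+3)$-cycles in $G^\no_k$, noting that any $(2k+3)$-cycle-free graph must omit at least one edge from each, yielding $\Omega(n)$ entry changes out of $\Theta(n)$ total entries. The paper states this more tersely (``at least $\ell'$ edges have to be removed'') and then divides by $dn/2$; you spell out the no-double-counting step via vertex-disjointness explicitly, which is a useful clarification but not a different argument.
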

\begin{proof}
  The first part is obvious, as the only cycles in $G^\yes_k$ are $(2k+4)$-cycles. As for the second, it immediately follows from observing that $G^\no_k$ contains $\ell'$ disjoint $(2k+3)$-cycles, and thus at least $\ell'$ edges have to be removed to make it $(2k+3)$-cycle free. Thus, $\dist{G^\no_k}{\property_{(2k+3),n,d}} \geq \frac{\ell'}{d n/2} = \bigOmega{\frac{1}{dk}} = \Omega_d(1)$. 
\end{proof}

Let \Tester be a deterministic testing algorithm with $k$ rounds of adaptivity and query complexity $q'=\littleO{\sqrt{n}}$. The following lemma concludes the proof of \autoref{lem:cycle_lower_bound} by  showing that  \Tester cannot distinguish, with high probability, between graphs in $\dyes$ and graphs in $\dno$. Denote \Tester's (disjoint) query sets, per round, by $Q_0,\dots, Q_k\subseteq V$, where a query is a vertex $v$. Denote the corresponding sets of answers by $A_0,\dots,A_k$, where the answer to a query $v$ consists of the labels of all neighbors of $v$ (i.e., either two or zero vertices). Since $k=\bigO{1}$, without loss of generality, we can assume (by padding) that all query sets have the same size $q\eqdef \abs{Q_i} = \frac{q'}{k+1} = \bigTheta{q'}$ for every $i\in\{0,\dots,k\}$. Moreover, we can also assume that no vertex is queried twice, i.e. that all $Q_i$'s are disjoint.

\begin{lemma}
	\label{lem:indisting}
  $\abs{ \probaDistrOf{G\sim\dyes}{\Tester^G\text{ accepts}} - \probaDistrOf{G\sim\dno}{\Tester^G\text{ accepts}} } \leq \frac{1}{10}$.
\end{lemma}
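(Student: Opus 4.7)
The strategy is to show, via a standard \emph{deferred-decisions} coupling combined with a birthday-paradox style collision bound, that the tester's transcript is essentially identically distributed under $\dyes$ and $\dno$ when $q' = \littleO{\sqrt{n}}$. The intuition is that in $k+1$ rounds of adaptivity, the tester can only reveal edges incident to vertices it actually queries, and hence cannot directly observe any edge ``between two unqueried vertices''; for it to detect a cycle of length $2k+3$ or $2k+4$, it would therefore need a fortuitous collision in which a freshly-revealed neighbor happens to coincide with an already-seen label, an event ruled out by the birthday paradox for query budget $\littleO{\sqrt{n}}$.

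First, I would realize the random labeling on-the-fly by maintaining a partial bijection between abstract vertices of the fixed unlabeled graph $G\in\{G^\yes_k, G^\no_k\}$ and labels in $[n]$, extended on demand: when the tester queries an unseen label $v$, the adversary samples a uniformly random still-unassigned abstract vertex $w$ and assigns $v$ to $w$, then returns the labels of $w$'s neighbors in $G$ (using the current bijection, and assigning fresh uniformly random labels to any neighbor not yet in its image). This faithfully realizes a uniformly random isomorphism, and is equivalent in distribution to the original sampling of $G \sim \dyes$ (resp.\ $\dno$).

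Next, I would define the bad event $B$ as the occurrence of a ``collision'': at some step of the execution, a freshly-sampled label happens to coincide with a previously-used one, thereby either closing a cycle in the revealed subgraph or merging two explored components. Since each query reveals at most $3$ new labels (the queried vertex plus its at most two cycle-neighbors), the pool of already-used labels has size $\bigO{q'}$ throughout, so a union bound over the $\bigO{q'}$ sampling events gives $\Pr[B] = \bigO{q'^2/n} = \littleO{1}$.

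Conditioned on $\neg B$, every revealed edge connects a queried vertex to a fresh label, so the revealed subgraph is a forest---no cycle is ever closed---regardless of the tester's adaptive strategy. Under this conditioning, the transcripts under $\dyes$ and $\dno$ depend only on the degree sequence of the visited abstract vertices (isolated vs.\ $2$-regular); since the number of isolated vertices is $\bigO{k} = \bigO{1}$ in both distributions, the two conditional transcripts couple except on a further $\bigO{q'/n} = \littleO{1}$ event. Combining, the two acceptance probabilities differ by $\littleO{1} \le 1/10$ for $n$ sufficiently large. The main obstacle in formalizing this plan will be carefully verifying that conditioning on $\neg B$ preserves the symmetry between $\dyes$ and $\dno$---that the two conditional views genuinely couple aside from the negligible isolated-vertex discrepancy---in a manner robust to arbitrary adaptive querying strategies (rather than only BFS-like ones).
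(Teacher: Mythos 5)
Your proposal follows essentially the same route as the paper: a deferred‑decisions simulation of the random labeling, a birthday‑style union bound to rule out ``accidental'' collisions (hitting an already‑touched cycle or an isolated vertex), and the claim that after conditioning on no collision, the answer distributions under $\dyes$ and $\dno$ coincide. The paper's proof has the same three ingredients, bounding the two bad events ($E_1$: hitting an isolated vertex, $E_2$: a fresh query landing in a previously‑reached cycle) in \autoref{claim:hit:no:isolated} and \autoref{claim:hit:no:previous:cycle}, and then exhibiting an explicit graph‑oblivious simulator for the conditional transcript.

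However, there is a gap in the final coupling step. You assert that ``Conditioned on $\neg B$, every revealed edge connects a queried vertex to a fresh label, so the revealed subgraph is a forest---no cycle is ever closed---regardless of the tester's adaptive strategy.'' This is not a consequence of conditioning on $\neg B$ alone. Your event $B$ (collision upon assigning an abstract vertex to a fresh query) rules out a \emph{random} collision, but not the \emph{structural} one: a boundary query at round $i$ can return an already‑assigned label deterministically if the tester has by then traversed far enough around a cycle that the two ``ends'' of its discovered arc meet. Whether this happens depends on the number of adaptive rounds relative to the cycle length, not on $B$. In particular, if the cycles had length $\le 2k+2$ the same $B$ would still be $o(1)$‑unlikely, yet a $(k,q)$‑round BFS \emph{would} close a cycle and trivially distinguish $\dyes$ from $\dno$. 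The paper makes this missing step explicit: starting from a fresh landing inside a cycle, each additional round of adaptivity can expose at most two further nodes of that cycle, so after $k+1$ rounds the tester observes only a path of at most $2k+2$ consecutive nodes and never both endpoints of the closing edge of a $(2k+3)$‑ or $(2k+4)$‑cycle. You need to add this counting argument (and you should also restate $B$ as a property of the sampled \emph{abstract vertices} rather than the fresh labels, since the latter are drawn from the unused pool and can never collide). With those two fixes, your proof becomes the paper's.
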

\begin{proof}
  For $j\in\{0,\dots,k\}$, define by $Y_j$ and $N_j$ the distribution of $(A_0,\dots,A_j)$ when $G\sim\dyes$ and when $G\sim\dno$, respectively. We shall prove that $\totalvardist{Y_k}{N_k} \leq \frac{1}{10}$, which by the data processing inequality will imply the claim of \autoref{lem:indisting}.
  
  The high-level idea is that in each round, the tester can either query ``fresh'' vertices, of which it has no prior information, or query the boundaries (i.e., the direct neighbors) of previously queried vertices. Then, loosely speaking we can argue that, on the one hand, if the total number of queries is $o(\sqrt{n})$, then both for graphs in $\dyes$ and $\dno$ all queries of ``fresh'' vertices (obtained during all rounds) with high probability would only fall into previously unattained disjoint cycles, in which case the answer would be a uniform sequence of ``fresh'' labels. On the other hand, the local view obtained by querying the boundary, using at most $k$ rounds of adaptive queries, of each vertex previously obtained via a ``fresh'' query (which by the above lies in a cycle wherein the tester has no information of the labels of the other vertices participating in this cycle) is isomorphic to the tail graph over fresh labels, both for instances taken from $\dyes$ and $\dno$ (that is, we do not have enough adaptive queries to observe a full cycle). The foregoing intuition is formalized below.
  
  For $i\in\{0,\dots,k\}$, define
  \begin{align*}
      S_{i}^{\rm f} &\eqdef Q_i \setminus\cup_{j=0}^{i-1} A_j \\
      S_{i}^{\rm b} &\eqdef Q_i \cap\cup_{j=0}^{i-1} A_j 
  \end{align*}
  to be, respectively, the set of ``entirely fresh'' nodes queried at round $i$ (that is, nodes that are not neighbors of any previously queried node), and the set of ``boundary nodes'' (which are the not-yet-queried nodes neighbors of a previously queried node).
  
  First, we bound the probability that any of the $q'$ queries made ``hits'' the set of disconnected nodes:
  \begin{claim}\label{claim:hit:no:isolated}
  Let $E_1(G)$ denote the event that $\Tester$ queries an isolated vertex of $G$, that is $E_1(G) \eqdef \{\exists i,v\text{ s.t. } v\in Q_i,\ \deg(v) = 0\}$. Then 
      $\probaDistrOf{G\sim\dyes}{ E_1(G) }, \probaDistrOf{G\sim\dno}{ E_1(G) } = o(1)$.
  \end{claim}
  \begin{proof}
    This follows by induction: at step $i$, conditioned on no isolated node having been queried yet, the algorithm has degree information about $\abs{\cup_{j=0}^{i-1} Q_j\bigcup\cup_{j=0}^{i-1} A_j} \leq \sum_{j=0}^{i-1} \abs{Q_j}+\sum_{j=0}^{i-1} \abs{A_j} \leq 3q\cdot i$ nodes, so there remain at least $n-3kq$ nodes on which the algorithm has no degree information at all. Among these, there are $n-(2k+4)\ell\leq (2k+4)$ (or $n-(2k+3)\ell'\leq (2k+3)$, in the \no-case) isolated nodes. By symmetry, this means that in the new batch of $q$ queries, the algorithm will query one of these isolated nodes with probability at most $1-\left(1-\frac{(2k+4)}{n-3kq-(2k+4)}\right)^q = 1-\left(1-\frac{O(1)}{n}\right)^{q} = \bigO{\frac{q}{n}} = o(1)$. Therefore, overall there will be an isolated node queried with probability at most $k\cdot o(1) = o(1)$.
  \end{proof}
  
  Next, we argue that at each step, with overwhelming probability all the ``fresh nodes'' queried fall in distinct cycles, which have not been attained yet.
    \begin{claim}\label{claim:hit:no:previous:cycle}
  Let $E_2(G)$ denote the event that at some round $i$, one of the queries in $S_{i}^{\rm f}$ belongs to the same cycle (either a $(2k+4)$- or a $(2k+3)$-cycle, depending on whether the graph is drawn from $\dyes$ or $\dno$) as one of the previous queries $\bigcup_{j=0}^{i-1} Q_j$. Then $\probaDistrOf{G\sim\dyes}{ E_2(G) }, \probaDistrOf{G\sim\dno}{ E_2(G) } = o(1)$.
  \end{claim}
  \begin{proof}
  We will show that $\probaDistrOf{G\sim\dyes}{ E_2(G) }  = o(1)$; the \no-case is similar. For $i\in\{1,\dots,k\}$, let $E^{(i)}_2(G)$ denote the event that at some round $i$, one of the queries in $S_{i}^{\rm f}$ belongs to the same cycle as a previous query, so that $E_2(G) = \bigcup_{i=1}^k E^{(i)}_2(G)$.
  
  Note that since $\lvert \bigcup_{j=0}^{i-1} Q_j\rvert = iq$, we have $\lvert \bigcup_{j=0}^{i-1} A_j\rvert \leq 2iq$ (and the number of distinct cycles reached is at most $\lvert \bigcup_{j=0}^{i-1} Q_j\rvert$).  Therefore, at round $i$ each of the at most $q$ distinct queries in $S_{i}^{\rm f}$ falls independently in a previously visited cycle with probability upper bounded by 
  \[
      \frac{iq\cdot(2k+4)}{n-3iq} \leq \frac{kq\cdot(2k+4)}{n-3kq} \leq \frac{2k^2q}{n}
  \]
  recalling that $q=o(n)$ and $k=O(1)$. A union bound over all at most $q$ queries of $S_{i}^{\rm f}$, and then over the $k$ rounds then shows that $\probaDistrOf{G\sim\dyes}{ E_2(G) } \leq \frac{2k^3q^2}{n} = o(1)$ (since $q = \littleO{\sqrt{n}}$).
  \end{proof}

To conclude the proof, note that by the above, with probability $1-o(1)$ neither $E_1$ nor $E_2$ occurs; that is, none of the isolated vertices was queried, and all the ``fresh'' queries (during all rounds ) fell in previously unattained distinct cycles. In this case, at each round of adaptivity the algorithm can at most discover two new nodes out of every cycle it reached before (by including the one or two end nodes of the current ``discovered portion'' into $S_{i}^{\rm b}$). Therefore, on any cycle ever reached, the  $(k,q)$-round-adaptive testing algorithm can observe at most $2k+2$ nodes (which then form a consecutive path). We show that this implies that the algorithm cannot distinguish between a $\no$-instance and a $\yes$-instance, as loosely speaking, in both cases its local view is of a tail graph over uniformly distributed fresh labels, and so it is unable to determine whether it belongs to a cycle of length $2k+3$ or $2k+4$.

To make the argument more precise, we will actually show a stronger statement; namely, we show that, conditioning on neither $E_1$ nor $E_2$ occuring, a simulator with no access to the graph can answer the queries of the testing algorithm in a way that is indistinguishable from the tuple of answers obtained from querying a graph distributed according to either $\dyes$ or $\dno$. This simulator operates as follows: at round $i$,
\begin{enumerate}
  \item Order (arbitrarily) all the nodes of $Q_i$: $v_1,\dots, v_q$, and initialize the set of available-to-sample nodes $U\gets V\setminus \left( Q_i\cup \bigcup_{j=0}^{i-1} Q_j\cup \bigcup_{j=0}^{i-1} A_j\right)$.
  \item Do sequentially the following, for $s=1\dots q$:
  \begin{itemize}
    \item if $v_s\in S_{i}^{\rm f}$ (fresh node: no previous neighbors known), pick uniformly at random two distinct nodes $u,u'$ in $U_s$ and return them as answers (i.e., declare them as neighbors of $v_s$);
    \item otherwise, $v_s\in S_{i}^{\rm b}$ (boundary node: exactly one already known neighbor, call it $u$): pick uniformly at random one other node $u'$ in $U_s$, and return $(u,u')$ as answers;
    \item update $U$ by removing $u,u'$: $U\gets U\setminus\{u,u'\}$
  \end{itemize}
\end{enumerate}
It is straightforward to verify that, since we conditioned on $\overline{E_1}$ and $\overline{E_2}$, this simulates exactly the same distribution over nodes (over the choice of $G$); since this is the same both for $\dyes$ and $\dno$, we get that
$\totalvardist{(Y_k\mid \overline{E_1\cup E_2}))}{(N_k\mid \overline{E_1\cup E_2}))} = 0$, which combined with~\autoref{claim:hit:no:isolated} and~\autoref{claim:hit:no:previous:cycle} finishes the proof.
\end{proof}
This concludes the proof of \autoref{lem:cycle_lower_bound}.
\end{proofof}

\section{Some Miscellaneous Remarks}
	\label{sec:misc}
\subsection{On Simulating $k$ Rounds With Fewer}\label{ssec:misc:reducing:rounds}

As mentioned in the beginning of ~\autoref{sec:hierarchy:natural}, in the Boolean setting any adaptive property testing algorithm can be simulated non-adaptively with only an exponential blowup in the query complexity. Phrased differently, this implies that any property of Boolean functions which admits a $(k,q)$-round-adaptive tester also has  a $(0,2^q-1)$-round-adaptive tester.

This begs the following more general question: let $\property = \bigcup_{n} \property_n$ be a property of Boolean functions, such that there exists a $(k,q)$-round-adaptive tester for $\property$. For $\ell < k$, what upper bound can we obtain on the query complexity $q'$ of the best $(\ell,q')$-round-adaptive tester for $\property$?

Denoting by $q_\ell$ this query complexity, the above discussion immediately implies:
\begin{fact}\label{fact:round:reduction:alltheway}
  For any $0\leq \ell \leq k$, one has $q_k \leq q_\ell \leq 2^{q_k}-1$.
\end{fact}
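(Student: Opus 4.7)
The statement splits into two independent inequalities, each of which I would prove by a direct structural argument.

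For the lower bound $q_k \le q_\ell$, the plan is a trivial padding argument: any $(\ell, q)$-round-adaptive tester is also a $(k, q)$-round-adaptive tester for every $k \ge \ell$, since the tester can simply declare its additional $k - \ell$ rounds to produce empty query sets $Q_i = \emptyset$. This does not change the queries actually made, the output distribution, or the total query count. Taking the infimum over $q$, we obtain $q_k \le q_\ell$. Note this inequality is what justifies focusing only on $\ell = 0$ for the upper bound, since $q_\ell \le q_0$ for all $\ell \ge 0$.

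For the upper bound $q_\ell \le 2^{q_k} - 1$, it therefore suffices to exhibit a $(0, 2^{q_k} - 1)$-round-adaptive tester. Take any $(k, q_k)$-round-adaptive tester $\Tester$ for $\property$ and condition on a fixed setting of its internal random coins. The resulting deterministic object is exactly a decision tree over binary answers of depth at most $q_k$ (each query to a Boolean function has two possible outcomes, and the total number of queries made along any root-to-leaf path is at most $q_k$). Such a tree has at most $2^{q_k} - 1$ internal nodes, each labelled by some query location in the domain. The non-adaptive simulator then proceeds as follows: sample the same random coins as $\Tester$, compute the (at most $2^{q_k} - 1$) query locations appearing as internal-node labels of the corresponding tree, issue all of them in a single round, and finally use the answers to retrace the path $\Tester$ would have taken and output the same accept/reject decision as the leaf. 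Completeness and soundness are inherited verbatim from $\Tester$.

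The main (and only) subtlety is handling the tester's randomness: one must be careful that the set of query locations to issue non-adaptively is determined before any answer is seen, which is why I fix the random string first and then read off the tree. Aside from this, there is no real obstacle; the fact is essentially a folklore restatement of the standard ``decision-tree depth $\le q \Rightarrow$ at most $2^q - 1$ query locations'' observation, combined with the monotonicity of query complexity in the number of allowed rounds. Since both bounds are witnessed by explicit simulations that respect the round-adaptivity definition of \autoref{def:ra:pt}, no auxiliary machinery (LTC/LDC encodings, decision-tree hierarchies, etc.) from the earlier sections is needed.
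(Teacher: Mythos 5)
Your proposal is correct and follows essentially the same route as the paper: the lower bound $q_k \le q_\ell$ is monotonicity in the number of allowed rounds (padding with empty rounds), and the upper bound $q_\ell \le q_0 \le 2^{q_k}-1$ is exactly the folklore ``fix the coins, read off the depth-$q_k$ binary decision tree, and query all $\le 2^{q_k}-1$ internal-node locations non-adaptively'' simulation that the paper invokes in the preceding paragraph without re-deriving it. You have simply spelled out the two lines the paper declares ``immediate.''
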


In what follows, we provide a example of a more fine-grained version of this fact, in the case when $\ell=k-1$ (that is, one wishes to reduce the number of rounds of adaptivity by one).

\begin{proposition}\label{prop:round:reduction:babystep}
  For any $0 < k$, one has $q_k \leq q_{k-1} \leq q_k(1+2^{\frac{q_k}{k}})$.
\end{proposition}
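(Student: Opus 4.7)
The plan is to handle the two inequalities separately. The left inequality $q_k \le q_{k-1}$ is immediate: any $(k-1,q)$-round-adaptive tester is in particular a $(k,q)$-round-adaptive one (just append an empty round $Q_k = \emptyset$), so the optimal complexity over $k$-round-adaptive testers can only be better than over $(k-1)$-round-adaptive ones.

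For the right inequality, I would use a round-merging construction. Start from an optimal $(k, q_k)$-round-adaptive tester $\Tester$, and let $s_0, s_1, \ldots, s_k$ denote its per-round query counts, so that $\sum_{\ell=0}^{k} s_\ell = q_k$. A simple averaging argument over the first $k$ rounds gives some index $i \in \{0, 1, \ldots, k-1\}$ with
\[
s_i \le \frac{s_0 + s_1 + \cdots + s_{k-1}}{k} \le \frac{q_k}{k}.
\]
The choice of $i$ among $\{0,\dots,k-1\}$ rather than $\{0,\dots,k\}$ is crucial, since we need $i+1$ to index an actual successor round with which to merge.

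Given this $i$, the merged $(k-1)$-round-adaptive tester $\Tester'$ proceeds exactly like $\Tester$ for rounds $0,\dots,i-1$ and $i+2,\dots,k$, but fuses rounds $i$ and $i+1$ into a single ``super-round.'' In this super-round, $\Tester'$ simultaneously (i) issues the $s_i$ queries that $\Tester$ would make in round $i$, and (ii) for each of the $2^{s_i}$ possible answer patterns to those queries, precomputes the $s_{i+1}$ queries that $\Tester$ would subsequently issue in round $i+1$ under that hypothetical pattern, and issues all of them in parallel. After receiving the answers, $\Tester'$ reads off the actual pattern for round $i$, selects the corresponding block of precomputed round-$(i+1)$ answers (discarding the rest), and resumes simulating $\Tester$. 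Correctness is immediate since $\Tester'$ has access to exactly the answers $\Tester$ would have seen on any branch, and $\Tester'$ now uses $k$ rounds total, matching the definition of $(k-1)$-round-adaptive.

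It remains to bound the query complexity:
\[
q' = \sum_{\ell \neq i, i+1} s_\ell \;+\; s_i \;+\; 2^{s_i} \cdot s_{i+1}
    = q_k + (2^{s_i} - 1)\, s_{i+1}
    \le q_k + 2^{q_k/k} \cdot q_k
    = q_k\bigl(1 + 2^{q_k/k}\bigr),
\]
using $s_i \le q_k/k$ and the trivial bound $s_{i+1} \le q_k$. There is no real obstacle here beyond the careful choice of which pair of rounds to merge; the key insight is simply that the exponential blow-up incurred by collapsing two rounds depends only on the size of the \emph{earlier} of the two merged rounds, so picking the earlier round to have the minimum size (among $\{0,\dots,k-1\}$) is what keeps the overhead at $2^{q_k/k}$ rather than $2^{q_k}$ as in~\autoref{fact:round:reduction:alltheway}.
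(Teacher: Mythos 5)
Your approach is the same as the paper's in outline — merge two consecutive rounds, choosing the earlier one via averaging so that it is cheap — and your bookkeeping of the merge cost is correct. The gap is in your treatment of the per-round query counts: you introduce fixed numbers $s_0,\dots,s_k$ summing to $q_k$, apply averaging to them, and then commit to a single merge index $i$ up front. But a round-adaptive tester is allowed to vary the number of queries it makes in round $\ell$ based on the answers received in rounds $0,\dots,\ell-1$, so the quantities $\abs{Q_\ell}$ are path-dependent, not global constants. Along some execution path, the round you committed to merging could turn out to have many queries, blowing the merge cost well past $2^{q_k/k}$. The remark immediately following this proposition in the paper flags exactly this pitfall.

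The fix, which the paper carries out, is to pick the merge index on the fly. Fix the tester's random coins (a randomized tester is a distribution over deterministic ones), view the resulting deterministic algorithm as a depth-$(k+1)$ decision tree, and observe that along each root-to-leaf path $(v_0,\dots,v_k)$ the averaging argument still produces some $i^*<k$ with $\abs{Q_{v_{i^*}}}\le q_k/k$ — but $i^*$ may differ from path to path. Since the simulator knows $\abs{Q_{v_j}}$ as soon as it has the answers to rounds $0,\dots,j-1$, it can decide to merge at the first qualifying round it encounters on its current path. Every path then uses at most $k$ rounds of queries, and on every path the added cost is at most $q_k\cdot 2^{q_k/k}$, giving the claimed bound. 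Your left inequality (appending an empty round) is correct, and is so immediate that the paper does not even spell it out.
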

\begin{proof}
Let $\Tester_k$ be a $(k,q)$-round-adaptive tester for $\property$, which can be viewed as a distribution over deterministic algorithms. Thus, it is sufficient to explain how to simulate any deterministic algorithm with $k$ rounds of adaptivity by one with $\ell$ rounds. Fix such a $(k,q)$-round deterministic algorithm: this can be seen equivalently as a depth-$(k+1)$ binary tree, where each internal node $v$ is labeled by the set of queries $Q_v$ made at that stage, and the leaves are either $\accept$ or $\reject$. By assumption, we have that on each path $(v_0,v_1,\dots,v_k,v^\ast)$ from the root to a leaf, $\sum_{j=0}^k \abs{Q_{v_j}} \leq q$; moreover, one can assume without loss of generality that this is an equality.

The idea is then to contract, on any path, two consecutive nodes as follows: instead of querying $Q_{v_j}$, receiving the answers, and then querying the (adaptively chosen) set $Q_{v_{j+1}}$, one can idea query simultaneously $Q_{v_j}$ and the union of all possible sets $Q_{v_{j+1}}$: since the latter depends only on the previous queries, and the only unknown answers are those to the queries in $Q_{v_j}$, there are at most $2^{\abs{Q_{v_j}}}$ possibilities for $Q_{v_{j+1}}$. As clearly no matter what $Q_{v_{j+1}}$ would be, its size is at most $q$, the set $Q'_i = Q_{v_j} \cup \bigcup_{Q\colon \text{ possible }Q_{v_{j+1}} } Q$ queried has size at most $\abs{Q_{v_j}} + q2^{\abs{Q_{v_j}}}$. Thus, by contradicting the two rounds $i$ and $i+1$, one incurs an additional number of queries upper bounded by $q2^{\abs{Q_{v_j}}} - \abs{Q_{v_{j+1}}} \leq q2^{\abs{Q_{v_j}}}$

By an averaging argument, since on every such path we have $\sum_{j=0}^k \abs{Q_{v_j}} = q$, there must exist an index $j^\ast$ such that $\abs{Q_{v_{j^\ast}}} \leq \frac{q}{k+1}$. Since we would like to ``contract'' rounds $j^\ast$ and $j^\ast+1$ into a single round, we additionally want to ensure $j^\ast < k$. But similarly, as $\sum_{j=0}^{k-1} \abs{Q_{v_j}} \leq q$ there exists $i^\ast$ such that $\abs{Q_{v_{i^\ast}}} \leq \frac{q}{k}$. We then get an index $i^\ast < k$ (which depends on the path taken down the tree) to which we can apply the above transformation. That is, whenever the deterministic algorithm is executed it will reach an index $i^\ast<k$ where it should make $\abs{Q_{v_{i^\ast}}} \leq \frac{q}{k}$ queries. At that point, it makes instead  these queries, along with all queries this should have triggered at the next round, and thus is able to skip round $i^\ast+1$ at the price of an additional (at most) $q2^{\frac{q}{k}}$ queries.
\end{proof}

\begin{remark}
Note that in the above proof, while one can assume without loss of generality that the algorithm always makes exactly $q$ queries, one \emph{cannot} however assume that for any two such paths $(v_0,v_1,\dots,v_k,v^\ast)$ and $(u_0,u_1,\dots,u_k,u^\ast)$, $\abs{Q_{v_j}} = \abs{Q_{u_j}}$ for all $0\leq j\leq k$. That is, the number of queries made in round $j$ may not be the same depending on the path followed down by the algorithm, but instead depend adaptively on the previous queries made.
\end{remark}

The above remark shows the difficulty in extending the proof of~\autoref{prop:round:reduction:babystep} further than a single round. If one is willing to assume that the number of queries at each round is non-adaptive, it becomes possible to obtain a more general statement for $0 \leq \ell < k$; however, it is unclear how to proceed without this extra assumption, leading to the following question:
\begin{openquestion}\label{question:round:reduction:arbitrarystep}
  Can one obtain a general round-reduction upper bound for $0\leq \ell < k$ of the form $q_\ell \leq \phi(q_k,\ell,k)$, improving on~\autoref{fact:round:reduction:alltheway} for $\ell > 0$?
\end{openquestion}


\subsection{On the Connection with Communication Complexity}\label{ssec:misc:communication:complexity}

As exemplified in the proof of~\autoref{lemma:dt:lb}, there exists a striking parallel between the notion of $k$-round-adaptive testing algorithms, and that of $k$-round protocols in communication complexity. In this section, we make this parallel rigorous, and give a blackbox reduction between the two that one can leverage to establish lower bounds on $k$-round-adaptive testing.

In more detail, we build on the communication complexity methodology for proving property testing lower bounds due to~\cite{BBM:12} (more precisely, to the general formulation of this methodology as laid out in~\cite{Goldreich:13}). Although the results stated there hold for non-adaptive lower bounds (in the case of one-way communication or simultaneous message passing) or
 fully adaptive lower bounds in property testing (in the case of two-way communication), it is easy to obtain their counterpart for $k$-round-adaptive, given in~\autoref{theo:cc:lb:pt} below. But first, we need to recall some notations.
 
 In what follows, for a property $\property$, integer $k$, and parameters $\eps,\delta\in[0,1]$, we write $Q^{(k)}_\delta(\eps,\property)$ for the minimum query complexity of any $k$-round-adaptive tester for $\property$ with error probability $\delta$ and distance parameter $\eps$. Given a communication complexity predicate $F$, we let $\operatorname{CC}^{(k)}_{\delta}(F)$, $\overrightarrow{\operatorname{CC}}_{\delta}(F)$, and $\overleftarrow{\operatorname{CC}}_{\delta}(F)$ denote respectively the minimum communication complexity of a public-coin protocol for $F$ with error $\delta$ in (i) $k$-rounds, (ii) one-way from Alice to Bob, and (iii) one-way from Bob to Alice, respectively (note that the case $\delta=0$ then corresponds to protocols with perfect completeness).
 
 \begin{theorem}\label{theo:cc:lb:pt}
  Let $\Psi=(P,S)$ be a promise problem such that $P,S\subseteq\bool^{2n}$, $\property\subseteq\bool^\ell$ be a property, and $\eps,\delta>0$.
 Suppose the mapping $F\colon\bool^{2n}\to\bool^\ell$ satisfies the following two conditions:
  \begin{enumerate}[(i)]
    \item for every $(x,y)\in P\cap S$, it holds that $F(x, y)\in\property$;
    \item for every $(x,y)\in P\setminus S$, it holds that $F(x, y)$ is $\eps$-far from $\property$.
  \end{enumerate}
  Then $Q^{(k)}_\delta(\eps,\property) \geq \frac{1}{B+1}\operatorname{CC}^{(k+2)}_{2\delta}(\Psi)$, where $B\eqdef \max_{i\in[\ell]} \max( \overrightarrow{\operatorname{CC}}_{\frac{\delta}{n}}(F_i), \overleftarrow{\operatorname{CC}}_{\frac{\delta}{n}}(F_i) )$ (and $F_i(x,y)$ is the $i$'th bit of $F(x,y)$). 
  Moreover, if $B' \eqdef \max_{i\in[\ell]} \max( \overrightarrow{\operatorname{CC}}_{0}(F_i), \overleftarrow{\operatorname{CC}}_{0}(F_i) )$, then $Q^{(k)}_\delta(\eps,\property) \geq \frac{1}{B'+1}\operatorname{CC}^{(k+2)}_{\delta}(\Psi)$.
 \end{theorem}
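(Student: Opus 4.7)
The plan is to prove the inequality by contrapositive: starting from any $(k,q)$-round-adaptive tester $\Tester$ for $\property$ with error $\delta$, I will construct a public-coin communication protocol for $\Psi$ that uses $k+2$ rounds, has error at most $2\delta$, and transmits at most $q(B+1)$ bits in total. This yields $\operatorname{CC}^{(k+2)}_{2\delta}(\Psi)\le (B+1)\cdot q$, which rearranged is the claimed bound. Conceptually, this is the natural round-adaptive analogue of the Blais--Brody--Matulef methodology (as formalized by Goldreich); the one new thing to arrange is that the number of communication rounds grows by exactly two, and not linearly, in $k$.

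On inputs $(x,y)\in P$, Alice and Bob first fix $\Tester$'s random tape using their shared randomness, and then execute $\Tester$ on the virtual input $z\eqdef F(x,y)\in\bool^\ell$. Each query $z_i = F_i(x,y)$ that $\Tester$ would make is simulated by running an optimal one-way protocol for $F_i$ (of length at most $B$), in whichever of the two directions I choose. Designating Alice to speak in odd communication rounds and Bob in even ones, I bundle all queries of $\Tester$'s batch $r$ into a single message in communication round $r+1$, using for each of those queries the one-way protocol whose direction matches the speaker of round $r+1$. Concretely: round $1$ carries Alice's partial transcripts for the queries of batch $0$; for $1\le r\le k$, round $r+1$ carries both (i) the $\abs{Q_{r-1}}$ answer bits to batch-$(r-1)$ queries (which the round's sender just finished computing from the transcripts they received in round $r$), and (ii) the sender's partial transcripts for the queries of batch $r$ (which she can determine, since she now knows all answers from batches $0,\ldots,r-1$); finally, round $k+2$ carries the $\abs{Q_k}$ answer bits back to the other party, so that both sides hold $\Tester$'s full transcript and can output its decision. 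Summing, each of the $q$ queries contributes at most $B$ partial-transcript bits plus $1$ answer bit, for a total of at most $q(B+1)$ bits spread over $k+2$ rounds.

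For the error analysis, each simulation of a single $F_i$ errs with probability at most $\delta/n$; since the tester makes $q\le n$ queries, a union bound guarantees that with probability at least $1-\delta$ all $q$ simulations are correct, in which case the execution exactly matches $\Tester$'s run on $z=F(x,y)$. Composing with $\Tester$'s own $\delta$ error probability yields total error at most $2\delta$, and completeness/soundness of the derived protocol for $\Psi$ follow directly from the defining conditions on $F$ (mapping $P\cap S$ into $\property$ and $P\setminus S$ into the $\eps$-far set). The ``moreover'' clause is proved by the same construction using zero-error one-way protocols of length $B'$, removing the simulation's contribution to the error. The crux of the argument --- and the step where the round-count bound really hinges --- is the pipelining that packs batch-$(r-1)$ answers together with batch-$r$ transcripts into a single message without creating a circular dependency: this works precisely because the party speaking in round $r+1$ learned the batch-$(r-1)$ answers from what they received in round $r$, and therefore knows the batch-$r$ queries (which depend only on prior answers) just in time to compose their round-$(r+1)$ message.
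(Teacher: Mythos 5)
Your proposal is correct and follows essentially the same route as the paper's proof: both simulate the $(k,q)$-round-adaptive tester by pipelining, in communication round $r+1$, the answer bits for batch $r-1$ together with the one-way partial transcripts for batch $r$, yielding a $(k+2)$-round protocol of cost at most $(B+1)q$. The paper defers the error accounting to its citation of Goldreich's Theorem~3.1, whereas you spell out the union bound over $q$ queries explicitly (implicitly assuming $q\le n$, which can be taken without loss of generality since otherwise the bound is trivial); this is a harmless and arguably welcome expansion rather than a different approach.
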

 \begin{proof}
 The proof will be identical to that of~\cite[Theorem 3.1]{Goldreich:13}, where we only need to check that Alice and Bob can each simulate the execution of the property testing algorithm (using their public random coins), answering the queries made to $F(x,y)$ while preserving the number of rounds. Running the testing algorithm, Alice first sends the bits allowing Bob to compute the answers to the first $q_0$ queries, using her input $x$ and the one-way protocols for the relevant $F_i$'s. Bob then answers with the $q_0$ bits corresponding to the answers he computed, as well as the bits allowing Alice to compute the answers to the next $q_1$ queries made by the tester, using now his input $y$ and the one-way protocols for the relevant $F_i$'s. They do so for $k+1$ rounds of communication in total, until the last player to receive a message gets from the other player both the answers to the queries in $Q_{k-1}$ as well as the bits needed to compute (given their own input) the answers to the last $q_k$ queries. At that point, it only remains to use a last round of communication (the $(k+2)$'nd) to communicate to the other player the answers to these last $q_k$ queries, so that both Alice and Bob can finish running their copy of the testing algorithm and know the answer.
 
 Note that the number of bits communicated at round $1\leq i\leq k+2$ is by definition of $B$ (resp. $B'$) at most $B\cdot q_{i-1}+ q_{i-2}$ (resp. $B'\cdot q_{i-1}+ q_{i-2}$), so that at most $(B+1)q$ (resp. $(B'+1)q$) bits are communicated in total. This concludes the proof.
 \end{proof}
 
 To illustrate the above methodology, we show how it can be leveraged to prove a hierarchy of lower bounds on the power of $k$-adaptive testers for testing a very fundamental class of  Boolean functions, that of $m$-linear functions.\footnote{We observe that establishing the upper bound counterpart to this result would provide an answer to~\autoref{openquestion:single:property}, although one rather weak quantitatively. It also, as a special case, would separate adaptive and non-adaptive testing of $m$-linearity for $m=o(n)$, a longstanding open question~\cite{BK:12,BCK:14}.}
 \begin{proposition}
 Let $\mathsf{PAR}_s^n\subseteq 2^{2^n}$ denote the class of parities of size $s$ (over $n$ variables), and fix $m\eqdef \frac{\sqrt{n}}{2}$. Then, for any $0\leq k\leq \log^\ast m -2$, any $(k,q)$-round-adaptive tester for $\mathsf{PAR}_{2m}^n$ must satisfy $q=\bigOmega{m\log^{(k+2)}m}$.
 \end{proposition}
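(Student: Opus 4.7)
The plan is to apply Theorem~\ref{theo:cc:lb:pt} with a reduction from sparse Set Disjointness on sets of size $m$ (in a universe of size $n$) to the task of testing $\mathsf{PAR}_{2m}^n$, and then invoke the round-hierarchy lower bound of Sağlam and Tardos for Set Disjointness.

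Identifying $\bool^n$ with the characteristic vectors of subsets of $[n]$, I consider the promise problem $\Psi=(P,S)$ with
\begin{equation*}
  P \eqdef \setOfSuchThat{(x,y)\in\bool^{2n}}{\abs{x}_1=\abs{y}_1=m},\quad S \eqdef \setOfSuchThat{(x,y)\in P}{\supp{x}\cap\supp{y}=\emptyset},
\end{equation*}
and the map $F\colon\bool^{2n}\to\bool^{2^n}$ defined by $F(x,y)_v \eqdef \bigoplus_{j\in[n]}(x_j\oplus y_j)v_j$ for $v\in\bool^n$, so that $F(x,y)$ is the truth table of the $\field{2}$-parity supported on the symmetric difference of $\supp{x}$ and $\supp{y}$. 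For $(x,y)\in P\cap S$, the support of $x\oplus y$ has size $2m$, and hence $F(x,y)\in\mathsf{PAR}_{2m}^n$; for $(x,y)\in P\setminus S$, the support of $x\oplus y$ has size $2m-2\abs{\supp{x}\cap\supp{y}}\leq 2m-2$, so $F(x,y)$ is a parity of size strictly smaller than $2m$, and therefore at relative Hamming distance exactly $1/2$ from $\mathsf{PAR}_{2m}^n$ (since any two distinct parities on $n$ variables disagree on exactly half of their inputs).

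Next, I observe that each output bit admits a deterministic one-bit one-way protocol in either direction: writing $F(x,y)_v = \bigl(\bigoplus_j x_j v_j\bigr)\oplus\bigl(\bigoplus_j y_j v_j\bigr)$, Alice can simply send the single bit $\bigoplus_j x_j v_j$ to Bob (or symmetrically). Hence $B'\leq 1$ in the ``perfect-completeness'' version of Theorem~\ref{theo:cc:lb:pt}, which immediately yields
\begin{equation*}
  Q^{(k)}_{1/3}(\eps,\mathsf{PAR}_{2m}^n) \;\geq\; \tfrac{1}{2}\,\operatorname{CC}^{(k+2)}_{1/3}(\Psi).
\end{equation*}

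Finally, I invoke the result of Sağlam and Tardos (\emph{On the Communication Complexity of Sparse Set Disjointness and Exists-Equal Problems}, FOCS 2013), which shows that $\operatorname{CC}^{(r)}_{1/3}(\Psi) = \bigOmega{m\log^{(r)}m}$ for every $r\leq\log^\ast m$, provided the universe size is polynomial in $m$. Plugging in $r=k+2$ with $k\leq\log^\ast m-2$ (which is exactly the Sağlam--Tardos regime, and for which our universe size $n=4m^2$ is comfortably polynomial in $m$) gives $q\geq\bigOmega{m\log^{(k+2)}m}$, as claimed. The main step to be careful with is lining up the parameters of the Sağlam--Tardos theorem with our setting (in particular, verifying that the lower bound holds regardless of which player speaks first, as required by the $(k+2)$-round simulation used in Theorem~\ref{theo:cc:lb:pt}); everything else is a direct instantiation of the CC-to-testing framework.
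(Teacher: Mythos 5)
Your proof is correct and follows essentially the same route as the paper: apply the round-preserving communication-to-testing framework (Theorem~\ref{theo:cc:lb:pt}) to a reduction from sparse set-disjointness on $m$-subsets of a universe of size $n=4m^2$, observe that each query to the truth table of $\chi_{x\oplus y}$ costs a single bit of deterministic one-way communication (so $B'=1$), and invoke the Sağlam--Tardos $\bigOmega{m\log^{(r)}m}$ lower bound on $r$-round protocols for sparse disjointness. One small note: you define $S$ as the \emph{disjoint} pairs, which is what the conditions of Theorem~\ref{theo:cc:lb:pt} actually require (disjoint $\Rightarrow$ $|x\oplus y|=2m$ $\Rightarrow$ $F(x,y)\in\mathsf{PAR}_{2m}^n$, intersecting $\Rightarrow$ $|x\oplus y|<2m$ $\Rightarrow$ distance $1/2$); the paper writes $S=\{(x,y)\in P:\abs{x\cap y}\neq 0\}$, which has the yes/no roles reversed relative to its own Theorem~\ref{theo:cc:lb:pt}, so your version is the one that literally matches the theorem's hypotheses (the argument is of course symmetric, so the paper's conclusion is unaffected). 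Your closing caveat about checking that the Sağlam--Tardos bound is indifferent to which player speaks first is a fair and relevant point that the paper leaves implicit.
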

 \begin{proof}
 We will rely on a result of Sa\u{g}lam and Tardos~\cite{SaglamT:13}, which implies the following (tight) lower bound on the communication complexity of sparse set-disjointness ($\mathsf{DISJ}_{m}^{n}$, where both inputs $x,y\in\bitset^n$ are promised to have Hamming weight $m$):
\begin{theorem*}[{Corollary of \cite[Theorem 4]{SaglamT:13}}]
For any $1 \leq k \leq \log^\ast m$, any $k$-round probabilistic protocol for $\mathsf{DISJ}_{m}^{4m^2}$ with error probability at most $1/3$ must have communication $\bigOmega{m\log^{(k)} m}$.
\end{theorem*}
It then suffices to provide a reduction from $\mathsf{DISJ}_{m}^{4m^2}$ to testing $\mathsf{PAR}_{2m}^{4m^2}$. We follow the known reduction, as can be found in~\cite{BBM:12,BGMW:13}. Namely, on input $x\in\bitset^n$ (resp. $y\in\bitset^n$), Alice (resp. Bob) forms the parity function $\chi_x$ (resp. $\chi_y$). As
$\abs{x\oplus y} = \abs{x}+\abs{y}-2\abs{x\cap y} = 2m-2\abs{x\cap y}$, the function $\chi_{x\oplus y}$ is a $2(m-\abs{x\cap y})$-parity. Moreover, as for any $z\in\bitset^n$ we have $\chi_{x\oplus y}(z)=\chi_{x}(z)\oplus \chi_{y}(z)$, each query can be answered (with zero error) by one bit of communication in either direction.

Put in the language of our reduction theorem, $\Psi=(P,S)$ with $P=\setOfSuchThat{ u\in\bitset^n}{\abs{u}=m}^2$ and $S=\setOfSuchThat{(x,y)\in P}{ \abs{x\cap y} \neq 0 }$; while $\ell=2^n$, $\property=\textsf{PAR}_{2m}^n\subseteq 2^{\ell}$; and 
$F\colon \bitset^{2n}\to \bitset^\ell$ maps $(x,y)$ to the truth table of $\chi_{x\oplus y}$. Since any two distinct parities are at distance $\frac{1}{2}$, we can take any $\eps\leq \frac{1}{2}$. We then have $B'=1$, and by the theorem above we know that
$
	\operatorname{CC}^{(k+2)}_{1/3}(\Psi) = \bigOmega{m\log^{(k+2)} m}
$
for any $0\leq k \leq \log^\ast m - 2$. Invoking~\autoref{theo:cc:lb:pt} concludes the proof. 
 \end{proof}

\subsection{On the Relative Power of Round- and Tail-Adaptive Testers}\label{sec:misc:round:tail}

In this section, we show that the two notions of round- and tail-adaptive testers we introduced are not equivalent. As mentioned in~\autoref{sec:def}, while round-adaptive testers are at least as powerful as tail-adaptive ones, there exist properties for which the separation is strict:
\begin{theorem}
\label{theo:tailround:theorem}
  Fix any $\alpha \in (0,1)$. There exists a constant $\beta \in (0,1)$ such that, for every $n\in\N$, the following holds. 
  For every integer $0\leq k\leq n^{\beta}$, there exists a property $\property_k \subseteq \field{n}^{n^{1+\alpha}}$ such that, for any constant $\eps\in(0,1]$,
    \begin{enumerate}[(i)]
      \item there exists a $(k,\tildeO{k})$-round-adaptive (one-sided) tester for $\property_k$; yet
      \item any $(k,q)$-tail-adaptive (two-sided) tester for $\property_k$ must satisfy $q=\bigOmega{n}$.
    \end{enumerate}
\end{theorem}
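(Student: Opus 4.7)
The plan is to adapt the framework of Theorem~\ref{theo:hierarchy:theorem} to a target function crafted so that it remains easy for $k$-round-adaptive algorithms but becomes hard for the more restrictive $k$-tail-adaptive algorithms, which are forbidden from issuing multiple adaptive queries in parallel per round. A natural candidate is a function whose computation inherently requires two pointer-following chains to be traced simultaneously: a round-adaptive tester can interleave the two chains across its rounds, whereas a tail-adaptive tester, limited to one query per adaptive round, cannot. Concretely, I would take $F_k$ defined on pairs $(x^{(1)},x^{(2)})$ by $F_k(x^{(1)},x^{(2)}) \eqdef f_k(x^{(1)}) \wedge f_k(x^{(2)})$, and set $\property_k \eqdef \mathcal{C}_{F_k}$ to be the property obtained by encoding $F_k^{-1}(1)$ via the LTC/rLDC of Theorem~\ref{theo:ssec:nice:codes} (one can embed this into the $\field{n}^{n^{1+\alpha}}$ framework of the theorem by appropriate halving, since the asymptotic claim is insensitive to constant-factor rescalings).

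For the round-adaptive upper bound, the template of Section~\ref{sec:hierarchy} directly applies: Claim~\ref{claim:dt:ub} gives $(k,k+1)$-round-adaptive DTs for each of the two $f_k$-subcomputations; running them in parallel (issuing one query per chain per round) yields a $(k, 2(k+1))$-round-adaptive DT for $F_k$; and Lemma~\ref{lemma:pt:dt} lifts this to a $(k,\tildeO{k})$-round-adaptive one-sided tester for $\property_k$. For the tail-adaptive lower bound, the transference of Lemma~\ref{lemma:ldt:pt} applies verbatim---its proof is purely structural and preserves whether a tester is tail-adaptive---so it suffices to show that any $(k,q)$-tail-adaptive LDT computing $F_k$ must have $q=\bigOmega{n}$.

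The crux of the plan is thus the LDT lower bound. A $(k,q)$-tail-adaptive LDT performs one non-adaptive batch of some $q_0 \leq q$ linear queries, followed by $k$ sequential single linear queries that can be allocated adaptively between the two chains as $K_1+K_2=k$. Each single adaptive query advances at most one chain by one pointer-following step, so chain $i$ receives only $K_i$ adaptive advances, and the non-adaptive batch must supply the remaining $k-K_i$ pointer-following steps of chain $i$ past its fixed starting position. Since $K_1+K_2=k$, the batch must supply at least $k$ chain-step queries in total, each one landing at an input position which, a priori (before the adaptive phase), is essentially uniform in $[n]$. A birthday-style calculation then shows that any fixed set of $q_0=o(n)$ linear queries covers all $k$ such random positions with vanishing probability, forcing $q_0=\bigOmega{n}$, and hence $q=\bigOmega{n}$.

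The main obstacle is making this covering argument rigorous for arbitrary \emph{linear} combinations over $\field{n}$, rather than single-coordinate queries. My plan is to route this through communication complexity, in the spirit of Lemma~\ref{lemma:dt:lb}: Alice and Bob hold halves of both chains' pointers, and a $(k,q)$-tail-adaptive LDT simulates a $(k+2)$-round CC protocol with one large first round of $O(q_0\log n)$ bits followed by $k+1$ rounds of only $O(\log n)$ bits each. The key claim to establish is that in such a protocol, the $k+1$ post-batch rounds can resolve at most one of the two independent pointer-following instances, so the first round must on its own compute the other; since any one-round one-way protocol for pointer-following of length $\geq 2$ requires $\bigOmega{n}$ bits of communication via a standard indexing-style reduction, this forces $q_0=\bigOmega{n}$ and completes the proof.
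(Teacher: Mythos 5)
Your overall roadmap matches the paper's: modify the hard function, verify that the transference lemma from property testing to LDT preserves tail-adaptivity, and establish the LDT lower bound via a reduction to a round-bounded communication problem. You also correctly observe that \autoref{lemma:ldt:pt} is structure-preserving. However, your construction is genuinely different from the paper's, and this difference is where the gap lies.

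The paper does not AND two independent copies of $f_k$. Instead, it keeps a \emph{single} pointer chain and modifies only the last step: after iterating the pointer to reach an index $i=g_{k-1}(x)$, the modified function $f'_k$ checks whether $x_i = x_{i+1 \bmod n}$. The point is that a round-adaptive algorithm can issue both queries $x_i$ and $x_{i+1}$ in parallel in its final round, but a $k$-round tail-adaptive algorithm---which gets exactly one adaptive query per round after the batch---is one adaptive query short. The crucial payoff of this design is that the hardness still reduces to a single pointer-chasing instance, so the Nisan--Wigderson bound (\autoref{theo:nw:pointer:cc:lb}) applies directly via essentially the same Alice/Bob simulation as in \autoref{lemma:dt:lb}, with no new communication-complexity machinery.

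Your construction $F_k(x^{(1)},x^{(2)})=f_k(x^{(1)})\wedge f_k(x^{(2)})$ instead requires lower-bounding the round communication complexity of the \emph{conjunction of two independent} pointer-chasing instances. You correctly flag this as ``the key claim to establish,'' but as stated it is a genuine gap, for two reasons. First, the assertion that ``the $k+1$ post-batch rounds can resolve at most one of the two independent pointer-following instances'' is a direct-sum-type statement; it is not implied by the NW bound, and proving it would need a round-elimination or information-cost argument tailored to this two-instance problem (such arguments are known to be delicate). Second, the jump from ``resolves at most one'' to ``the first round must on its own compute the other'' does not follow: the large first round and the $k+1$ small rounds can both contribute partial information to both chains simultaneously, and with linear (LDT) queries a single message can even entangle the two inputs, so you cannot cleanly attribute which round ``computes'' which chain. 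You also need to deal with the $\wedge$ structure---on half of the inputs the answer is forced by one chain alone, so the hard distribution and the conditioning must be set up with some care. None of these are obviously fatal, but they amount to a new CC lower bound that you would have to prove, whereas the paper's single-chain modification inherits the existing one essentially for free.
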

\begin{proof}[Proof sketch]
The argument is very similar to that of~\autoref{theo:hierarchy:theorem}, and follows the same overall structure. Namely, we slightly modify the $k$-iterated function $f_k$ of~\autoref{sec:hierarchy} (which \emph{was} computable by a $(k,k+1)$-tail-adaptive algorithm) to rule out tail-adaptive algorithms but not round-adaptive ones:
that is, we define the function $f'_k\colon\field{n}^n \to \field{n}$ by
\begin{align*}
  f'_k(x) = \begin{cases}
  1 & \text{ if } x_{x,g_{k-1}(x)}=x_{x,g_{k-1}(x)+1 \bmod n} \\
  0 & \text{  otherwise.}
  \end{cases}
\end{align*}
(Perhaps more clearly, $f'_k$ is computed by iterating the pointer function $k$ times, and then checking if the value $x_i$ at the final coordinate $i\in[n]$ reached, and the value $x_{i+1}$ at the adjacent coordinate $i+1$, are equal.) It is not hard to see that the counterparts of~\autoref{claim:dt:ub} and~\autoref{lemma:dt:lb} still hold for $f'_k$: first, the function is still easy to compute by $(k,k+2)$-round-adaptive algorithms. However, because the very last round requires $2$ queries and not one (to query $x_i$ and $x_{i+1}$, once the value of $i=g_{k-1}(x)$ has been obtained), tail-round-adaptive algorithms are no longer able to leverage this, and analogously to~\autoref{lemma:dt:lb} we can conclude that there is no $(k,o(n/(k^2\log n)))$-round-adaptive (randomized) LDT algorithm which computes $f'_{k}$. It then only remains to lift this DT separation to property testing: we can do this as before (noting, in the case of lifting the lower bound, that the reduction of~\autoref{lemma:ldt:pt} preserves the number of queries per round, and thus the ``tailness'' of the algorithm).
\end{proof}

\section*{Acknowledgments}
We are grateful to Oded Goldreich for suggesting cycle freeness as a candidate natural property for proving an adaptivity hierarchy theorem, as well as for enlightening conversations that significantly contributed to this work; and wish to thank Rocco Servedio for helpful comments on an earlier version of this paper.

\clearpage
\bibliographystyle{alpha}
\bibliography{references} 

\clearpage
\appendix

\end{document}